
\documentclass[preprint,aps,prd,floatfix]{revtex4-1}
\usepackage{graphicx}
\usepackage{amsfonts}
\usepackage{amsmath}
\usepackage{rotating}
\usepackage{amssymb,amsthm}

\newcommand\GLC{{\overset{\ \circ}{G}}{}}

\newtheorem{thm}{Theorem}[section] 
\newtheorem{cor}[thm]{Corollary} 
 
\newtheorem{prop}[thm]{Proposition} 
 
\theoremstyle{definition} 
  
\theoremstyle{remark}

\newcommand{\mbold}[1]{\mbox{\boldmath{\ensuremath{#1}}}}

\def\beq{\begin{eqnarray}}  
\def\eeq{\end{eqnarray}}  

\def \bell {\mbox{{\mbold\ell}}}
\def \bn {\mbox{{\bf n}}}
\def \bm {\mbox{{\bf m}}}
\def \bh {\mbox{{\bf h}}}
\def \bomega {\mbox{{\mbold \omega}}}

\def \bT {\mbox{{ {\bf T}}}}
\def \bV {\mbox{{{\bf V}}}}
\def \bA {\mbox{{{\bf A}}}}

\begin{document}


\title{Symmetry and Equivalence in Teleparallel Gravity}

\author {A. A. Coley}
\email{aac@mathstat.dal.ca}
\affiliation{Department of Mathematics and Statistics, Dalhousie University, Halifax, Nova Scotia, Canada, B3H 3J5}

\author{R. J. \surname{van den Hoogen}}
\email{rvandenh@stfx.ca}
\affiliation{Department of Mathematics and Statistics, St. Francis Xavier University, Antigonish, Nova Scotia, Canada, B2G 2W5}

\author {D. D. McNutt}
\email{david.d.mcnutt@uis.no}
\affiliation{ Department of Mathematics and Physics, University of Stavanger, Stavanger, Norway}



\begin{abstract}

In theories such as teleparallel gravity and its extensions, the frame basis replaces the metric tensor as the primary object of study. A choice of coordinate system, frame basis and spin-connection must be made to obtain a solution from the field equations of a given teleparallel gravity theory. It is worthwhile to express solutions in an invariant manner in terms of torsion invariants to distinguish between different solutions. In this paper we discuss the symmetries of teleparallel gravity theories, describe the classification of the torsion tensor and its covariant derivative and define scalar invariants in terms of the torsion. In particular, we propose a modification of the Cartan-Karlhede algorithm for geometries with torsion (and no curvature or nonmetricity). The algorithm determines the dimension of the symmetry group for a solution and suggests an alternative frame-based approach to calculating symmetries. We prove that the only maximally symmetric solution to any theory of gravitation admitting a non-zero torsion tensor is Minkowski space. As an illustration we apply the algorithm to six particular exact teleparallel geometries. From these examples we notice that the symmetry group of the solutions of a teleparallel gravity theory is potentially smaller than their metric-based analogues in General Relativity.



%

\end{abstract}

\maketitle

\tableofcontents 

\section{Introduction}

While Einstein's General Theory of Relativity (GR) is well accepted, there continues to be interest in alternative theories. It can be argued that GR has been verified for isolated masses on length scales of the solar system but that it continues to face challenges on both quantum and cosmological scales.  One notable problem is the inclusion of  dark energy and dark matter \cite{Nojiri_Odintsov2006,Capozziello_DeLaurentis_2011} in cosmological models. These quantities play important roles in our description of the universe: dark matter influences the large scale dynamics of matter and dark energy is associated with the observed accelerated expansion of the universe.

Dark energy can be introduced into GR by treating the cosmological constant, $\Lambda$, as an additional parameter in physics  (e.g.,  by treating $\Lambda$ as the vacuum expectation value). However, this vacuum expectation value calculated via quantum field theory is 120 orders of magnitude larger than the observed value. Furthermore, GR is unable to describe spacetime physics in the quantum regime: i.e., at scales of the order of the Planck length. Einstein-Cartan theories, where both torsion and curvature are non-zero, have been proposed to describe gravity in the quantum regime \cite{sharma2014}. One particular class of gravity theories assumes that the dynamics of the gravitational field are encoded in the torsion of spacetime with vanishing curvature \cite{Li_Sotiriou_Barrow2010, Krssak2015, Bahamonde_Boehmer_Wright2015}.

A notable class of torsionful but curvatureless gravitational theories of gravity, called teleparallel theories of gravity (or teleparallel gravity in brief) arise from assuming that both the non-metricity and the curvature of the affine connection are zero. Teleparallel theories of gravity have a long history of analysis, including Einstein himself who believed that the space of distant parallelism, (also called ``absolute parallelism'' or ``tele-parallelism'' by other authors) is the most promising candidate to unify gravitation and electromagnetism \cite{Vargas1992}. Intriguingly, there exists a subclass of teleparallel theories of gravity that are dynamically equivalent to GR.  The Lagrangian for this teleparallel equivalent to GR (TEGR) theories is based on a scalar, $T$, constructed from the torsion tensor and differs from the Lagrangian of GR by a total derivative, implying the field equations for each are formally equivalent. We will call the solutions of the field equations to teleparallel gravity theories, {\em teleparallel geometries}.

One popular generalization of the TEGR is $f(T)$ gravity. In one formulation of $f(T)$ gravity \cite{Li_Sotiriou_Barrow2010, Ferraro:2006jd, Bengochea:2008gz, Sotiriou_Li_Barrow2011,Ferraro_Fiorini2011} (called {\it pure tetrad teleparallel $f(T)$ gravity}) the spin-connection is required to vanish with respect to all frames leading to the torsion tensor being effectively replaced with the coefficients of anholonomy, which is not covariant under local Lorentz transformations. This implies a violation of Lorentz symmetry, so that these theories are frame dependent, as any solution to the field equations depend on the choice of frame \cite{Li_Sotiriou_Barrow2010}. 
To reconcile this, it has been noted that if the teleparallel geometry is defined in a gauge invariant manner as a geometry with zero curvature, then the most general spin-connection that satisfies this requirement is the {\it purely inertial spin-connection}, which vanishes in a very special class of frames (``proper frames'') where all inertial effects are absent, and non-zero in all other frames \cite{Obukhov_Pereira2003,Obukhov_Rubilar2006,Lucas_Obukhov_Pereira2009,Aldrovandi_Pereira2013,Krssak:2018ywd}. The primary benefit of this approach is that by using the purely inertial connection, the resulting teleparallel gravity theory embodied by the Lorentz covariant field equations (We refer to this theory as {\it covariant teleparallel $f(T)$ gravity}) has local Lorentz invariance. With this approach, which we take here, one may use an arbitrary tetrad in an arbitrary coordinate system with the corresponding spin-connection to produce equivalent field equations to those in the proper frame \cite{Krssak_Saridakis2015,Krssak_Pereira2015}.

For a given teleparallel gravity theory, to determine solutions of the corresponding field equations one must choose a coordinate system, $x^\mu$, a frame basis, $\bh^a$, and a spin-connection, $\omega^a_{~bc}$ or alternatively, a coordinate system and a proper frame basis in which the spin-connection is trivial. The choice of frame and the resulting spin-connection will be discussed in section \ref{sec:BasicTEGR}. It is possible that two seemingly distinct choices of the coordinates, frame basis and spin-connection which satisfy the teleparallel field equations are in fact the same solution but this fact can be hidden by these choices. A criteria to uniquely characterize a solution in an invariant manner is necessary. Two solutions can be shown to be inequivalent by comparing the group of symmetries for each solution, such as {\it isometries} which are symmetries of the metric and are necessarily coordinate-independent \cite{Hecht:1992xn}.

The role of an isometry in covariant teleparallel gravity theories is not as clear as in metric based theories. In a teleparallel geometry, the tetrad (or (co)frame) and spin-connection, replace the metric as the principal object of study since the frame and spin-connection are used in the calculation of the torsion tensor and the field equations. The metric is now a derived tensor constructed from symmetric products of the frame elements. It is worthwhile to ask whether the symmetries of a particular teleparallel geometry coincide with the set of isometries.

We note that in metric-based theories, such as those arising from Riemann-Cartan geometries, that in addition to leaving the metric invariant, an isometry is a diffeomorphism from the space into itself and hence preserves the geometric structure of the space. This structure will be discussed in section \ref{sec:CKalg} when the modified Cartan-Karlhede algorithm is reviewed. However, the diffeomorphism condition of an isometry can be stated as: 
\beq \mathcal{L}_{{\bf X}} g_{ab} = 0,~ \mathcal{L}_{{\bf X}} T_{abc}=0,~ \mathcal{L}_{{\bf X}} R_{abcd}=0, \mathcal{L}_{{\bf X}} R_{abcd|e_1 \ldots e_p} = 0 \text{ and } \mathcal{L}_{{\bf X}} T_{abc|e_1 \ldots e_p} = 0  ~\forall p \geq 1. \label{Intro:FS0} \eeq

\noindent In the case of Riemannian geometries, where the Levi-Civita connection is used, the Lie derivatives of the torsion and its covariant derivatives are zero and the resulting equations are trivially satisfied.

For frame based theories, equation \eqref{Intro:FS0} can be restated: 
\beq \mathcal{L}_{{\bf X}} \bh_a = 0,~ \mathcal{L}_{{\bf X}} T_{abc}=0,~ \mathcal{L}_{{\bf X}} R_{abcd}=0, \mathcal{L}_{{\bf X}} R_{abcd;e_1 \ldots e_p} = 0 \text{ and } \mathcal{L}_{{\bf X}} T_{abc;e_1 \ldots e_p} = 0  ~\forall p \geq 1 \label{Intro:FS1} \eeq
\noindent where $\bh_a$ is a geometrically preferred frame known as an {\it invariant frame}. The determination of an invariant frame may require considering the original manifold as a submanifold of a larger manifold and this will be discussed in section \ref{sec:CKalg} as well.  Any vector-field, ${\bf X}$, satisfying \eqref{Intro:FS1} is a {\it frame symmetry} on $\mathcal{F}(M)$, the frame bundle of $M$ \cite{olver1995}. 

We will define an affine frame symmetry as a vector field, ${\bf X}$, satisfying \cite{fon1992}:
\beq \mathcal{L}_{{\bf X}} \bh_a = 0 \text{ and } \mathcal{L}_{{\bf X}} \omega^a_{~bc} = 0, \label{Intro:FS2} \eeq

\noindent where $\omega^a_{~bc}$ denotes the spin-connection relative to the invariant frame. This definition is {\it a frame-dependent} analogue of the definition of a symmetry introduced by \cite{HJKP2018} since the transformation rule for the Lie derivative under a Lorentz transformation, 
%
$\bh_a = \Lambda_{a}^{~b} \bh^{'}_b = (\Lambda^{-1})^b_{~a} \bh_b$ is:
\beq \mathcal{L}_{\bf X} \bh_b & = \mathcal{L}_{\bf X'} \bh^{'}_{d} (\Lambda^{-1})^d_{~b} + (X^{'f} \bh^{'}_d [(\Lambda^{-1})^g_{~e}] \Lambda^e_{~f} \bh^{'}_{g}) (\Lambda^{-1})^d_{~b}. \eeq 

\noindent It can be shown that \eqref{Intro:FS2} is  equivalent to \eqref{Intro:FS1}. These vector fields were previously called {\it affine symmetries} \cite{fon1992}. An affine frame symmetry is an isometry but not all isometries are affine frame symmetries.

Even if two teleparallel geometries have equivalent symmetry groups this will not prove that they are equivalent. To show that two teleparallel geometries are equivalent we must compute and compare invariant quantities associated with each teleparallel geometry. In this paper, we will propose a modification of the Cartan-Karlhede algorithm which is applicable to {\it any} teleparallel geometry, regardless of the field equations. Using the algorithm we will show that any theory of gravity admitting non-zero torsion will not permit a maximally symmetric space except Minkowski space. Furthermore this result implies that the group of affine frame symmetries for any $4$-dimensional (4D) Riemann-Cartan geometry with non-zero torsion is at most seven-dimensional.

The outline of the paper is as follows. In section \ref{sec:BasicTEGR}, we briefly review the formulation of covariant $f(T)$ teleparallel gravity. In section \ref{sec:CKalg} we introduce the Cartan-Karlhede (CK) algorithm adapted to teleparallel geometries, discuss the concept of an affine frame symmetry and use affine frame symmetries to determine the symmetry group of a teleparallel geometry. In section \ref{sec:TorsionDecomp}, we classify the irreducible parts of the torsion tensor using the alignment classification and identify the linear isotropy group of the torsion tensor. We also present a rank four tensor whose trace appears in the field equations of TEGR. Using this classification we determine the largest group of symmetries for a Riemann-Cartan space admitting non-zero torsion and the highest iteration of the Cartan-Karlhede algorithm needed to classify teleparallel geometries.  In section \ref{sec: Examples} we illustrate the algorithm by applying it to six well-known examples. In section \ref{sec:discussion} we summarize our results and discuss future work.

\subsection{Notation}

As in GR, the formulation of teleparallel theories of gravity requires similar notation. We will denote the coordinate indices by $\mu, \nu, \ldots$ and the tangent space indices by $a,b,\ldots$. Unless otherwise indicated the spacetime coordinates will be $x^\mu$. The frame fields are denoted as $\bh_a$ and the dual coframe one-forms are $\bh^a$. The vielbein components are $h_a^{~\mu}$ or $h^a_{~\mu}$. For a given anholonomic coframe, $\bh^a$, the structure coefficients or coefficients of anholonomy are $C^c_{~ab}$. The spacetime metric will be denoted as $g_{\mu \nu}$ while the Minkowski tangent space metric is $\eta_{ab}$. We will denote the fully anti-symmetric Levi-Civita symbol as $\epsilon_{abcd}$ with $\epsilon_{1234} = 1$. 

To denote a local Lorentz transformation leaving $\eta_{ab}$ unchanged, we write $\Lambda_a^{~b}(x^\mu)$. The spin-connection one-form $\bomega^a_{~b}$, is designated by $\bomega^a_{~b} = \omega^a_{~bc} \bh^c$. The curvature and torsion tensors will be denoted, respectively, as $R^a_{~bcd}$ and $T^a_{~bc}$, in addition we will work with the torsion two-form, ${\bf T}^a = \frac12 T^a_{~bc} \bh^b \wedge \bh^c$. The irreducible parts of the torsion tensor: the vector part, the axial part and the purely tensorial part will be denoted as $V^a, A^a$ and $t_{(ab)c}$, respectively.

Covariant derivatives with respect to a Levi-Civita connection will be denoted using a semi-colon, $T_{abc;e_1}$. A vertical bar will denote covariant differentiation with respect to the spin-connection, $T_{abc|e_1}$ or as $D_{e_1} T_{abc}$. We will also write $D_{\mu } = h^a_{~\mu} D_a$. 


We will denote the set of components of the torsion tensor and its covariant derivatives up to $q$-th order as $\mathcal{T}^q = \{ t_{abc}, t_{abc|e_1}, \ldots _{abc|e_1 \ldots e_q} \}$, in addition we write $\mathcal{T} = \mathcal{T}^p$ for the entire set of Cartan invariants needed to classify a teleparallel geometry, where the final iteration is $q=p$. The number of functionally independent components of the torsion tensor and its covariant derivatives up to order $q$ will be recorded by $t_q$. The linear isotropy group of the $q$-th covariant derivative of the torsion tensor will be denoted as $H_q$.

\newpage 
\section{A brief review of teleparallel gravity}\label{sec:BasicTEGR}

As with  GR and its modifications, there are many possible generalized teleparallel gravity theories \cite{Krssak:2018ywd}, but we will focus on the well-studied $f(T)$ theories. Let $M$ be a 4D differentiable manifold $M$ with coordinates $x^\mu$.  Due to the differentiability of $M$, there exists a non-degenerate coframe field $\bh^a$ which is defined on a subset $U\subset M$.  To achieve a notion of geometry (lengths and angles) we also assume the existence of a symmetric metric field on $M$, with coordinates $g_{ab}$ with respect to this coframe field.  Further, in order to compare quantities at different points or equivalently to define a covariant differentiation process, we require a notion of parallel transport, and assume the existence of a linear affine connection one form $\bomega^{a}_{\phantom{a}b}$.  In general, the geometrical quantities $g_{ab},\bh^a,\bomega^{a}_{\phantom{a}b}$ are independent and have 10, 16, and 64 independent elements respectively.  It is only when we begin to make additional assumptions that various constraints and relationships between $g_{ab},\bh^a,\bomega^{a}_{\phantom{a}b}$ become apparent.

\subsection{Gauge choices}

We will assume the ({\em Principle of Relativity}) and the consequent covariance of the field equations under local $GL(4,\mathbb{R})$ gauge transformations. We take advantage of this gauge freedom of local linear transformations in the tangent space to judiciously simplify aspects of the calculations. The {\bf Orthonormal gauge} is a gauge choice that diagonalizes the tangent space metric $g_{ab}= \eta_{ab} = \mbox{Diag}[-1,1,1,1]$ and since $g_{ab}$ is symmetric this is always possible. A useful alternative is the {\bf Null Gauge} in which the tangent space metric becomes, $$g_{ab} = \left[ \begin{array}{cccc} 0 & -1 & 0 & 0 \\ -1 & 0 & 0 & 0 \\ 0 & 0 & 1 & 0 \\ 0 & 0 & 0 & 1 \end{array}\right]. $$ Further, one can also use the {\bf Complex Null Gauge} in which case, $$g_{ab} = \left[ \begin{array}{cccc} 0 & -1 & 0 & 0 \\ -1 & 0 & 0 & 0 \\ 0 & 0 & 0 & 1 \\ 0 & 0 & 1 & 0 \end{array}\right]. $$

There still exists a $O(1,3)$ subgroup of $GL(4,\mathbb{R})$ of residual gauge transformations  of the coframe that leaves the metric $g_{ab}=\eta_{ab}$ invariant.  Since one typically also wants to preserve the orientation of space and the direction of time, one further restricts this subgroup of allowable transformations to be the proper ortho-chronous Lorentz subgroup, $SO(1,3)$ of $O(1,3)$. We shall denote these Lorentz transformations of the coframe as: $$ \bh^a \to \Lambda^a_{~b} {\bh'}^b. $$ 

Hence, if one chooses the orthonormal gauge, then the resulting field equations must transform homogeneously under any remaining gauge freedom, in this case $SO(1,3)$ Lorentz transformations. Since the metric is completely fixed in this gauge, only the coframe and spin-connection are independent dynamical variables yielding 16+64=80 independent elements albeit with a remaining 6 dimensional $SO(1,3)$ gauge freedom.

With the additional assumption that the spin-connection be metric compatible, i.e., $Q_{abc}\equiv  - D_{c} g_{ab} =0$, the spin-connection becomes anti-symmetric, $\bomega_{(ab)}=0$.  Due to the algebraic nature of this constraint, it can be implemented easily without loss of generality. The fundamental variables remaining are the 16 elements of the coframe $\bh^a_{\phantom{a}}$ and the 24 elements of the anti-symmetric spin-connection $\bomega^a_{\phantom{a}b}$.  The torsion and the curvature associated with the coframe and spin-connection are
\begin{eqnarray}
T^a_{\phantom{a}\mu\nu}&=&\partial_\mu h^a_{\phantom{a}\nu}-\partial_\nu h^a_{\phantom{a}\mu}+\omega^a_{\phantom{a}b\mu}h^b_{\phantom{a}\nu}-\omega^a_{\phantom{a}b\nu}h^b_{\phantom{a}\mu},\\
R^a_{\phantom{a}b\mu\nu} &=& \partial_\mu \omega^a_{\phantom{a}b\nu}-\partial_\nu \omega^a_{\phantom{a}b\mu}+\omega^a_{\phantom{a}c\mu}\omega^c_{\phantom{a}b\nu}-\omega^a_{\phantom{a}c\nu}\omega^c_{\phantom{a}b\mu}.
\end{eqnarray}

\subsection{The torsion scalar and the gravitational Lagrangian}

To derive the field equations for teleparallel gravity, we will consider a Lagrangian for $f(T)$ teleparallel gravity where the scalar quantity, $T$, is defined as
\begin{equation}
T =
\frac{1}{4} \; T^\rho_{\phantom{\rho}\mu\nu} \, T_\rho^{\phantom{\rho}\mu \nu} +
\frac{1}{2} \; T^\rho_{\phantom{\rho}\mu\nu} \, T^{\nu \mu}_{\phantom{\rho\rho}\rho} -
            T^\rho_{\phantom{\rho}\mu\rho} \, T^{\nu \mu}_{\phantom{\rho\rho}\nu}.
\label{TeleLagra}
\end{equation}
\noindent The scalar $T$ can be written more compactly using the super-potential, $S^a_{~\mu \nu}$, expressed in terms of the torsion tensor
\begin{equation}
S_a^{\phantom{a}\mu\nu}=\frac{1}{2}\left(T_a^{\phantom{a}\mu\nu}+T^{\nu\mu}_{\phantom{\nu\mu}a}-T^{\mu\nu}_{\phantom{\mu\nu}a}\right)-h_a^{\phantom{a}\nu}T^{\rho \mu}_{~~\rho} + h_a^{\phantom{a}\mu}T^{\rho \nu}_{~~\rho},
\end{equation} 
\noindent  so that:
\begin{equation}
T=\frac{1}{2}T^a_{\phantom{a}\mu\nu}S_a^{\phantom{a}\mu\nu}.
\end{equation}

The structure of the gravitational Lagrangian for $f(T)$ teleparallel gravity theories is then
\begin{equation}
L_{Grav}(h^a_{\phantom{a}\mu},\omega^a_{\phantom{a}b\mu})=\frac{h}{2\kappa}f(T).
\end{equation}
\noindent where $h$ is the determinant of the vielbein matrix $h^a_{~\mu}$. Since we must account for the non-trivial transformational properties of the spin-connection, we compute all variations of the action and  include a non-trivial spin-connection \cite{Krssak_Saridakis2015}. 

\subsection{Matter Lagrangian and the canonical energy momentum}

To describe non-vacuum teleparallel gravity solutions, we will include a matter Lagrangian to supplement the teleparallel Lagrangian. Assuming that the matter field $\Phi$ is spinless and minimally coupled to gravity, the matter Lagrangian $L_{Matt}$ will be a function of the tetrad field $h^a_{\phantom{a}\mu}$, the matter field $\Phi$ and its covariant derivatives, $D_\nu\Phi$,
\begin{equation}
L_{Matt}=L_{Matt}(h^a_{\phantom{a}\mu},\Phi,D_\nu\Phi).
\end{equation}
We define the canonical energy momentum as,
\begin{equation}
h\Theta_a^{\phantom{a}\mu}=-\frac{\delta L_{Matt}}{\delta h^{\vphantom{\mu}a}_{\phantom{a}\mu}}.
\end{equation}

\subsection{Lagrangian formulation of $f(T)$ teleparallel gravity}

The field equations are calculated by varying the action:
\beq S=\int d^4x\Bigl(L_{Grav}(h^a_{\phantom{a}\mu},\omega^a_{\phantom{a}b\mu})+L_{Matt}(h^a_{\phantom{a}\mu},\Phi,D_\nu\Phi)\Bigr). \label{TPaction} \eeq
\noindent If we consider the spin-connection as an independent quantity having zero curvature, then the gravitational Lagrangian is of the form:
\begin{equation}
L_{Grav}(h^a_{\phantom{a}\mu},\omega^a_{\phantom{a}b\mu},\nu_a^{\phantom{a}b\mu\nu})=\frac{h}{2\kappa}f(T)+\nu_a^{\phantom{a}b\mu\nu}R^a_{\phantom{a}b\mu\nu},
\end{equation}
the Lagrange multiplier has been introduced to address the condition of the connection with zero curvature, which is anti-symmetric in the last two indices
\begin{equation}
\nu_a^{\phantom{a}b(\mu\nu)}=0,
\end{equation}
The variation of \eqref{TPaction} with respect to $\Phi$, $\nu_a^{\phantom{a}b\mu\nu}$, $h^a_{\phantom{a}\mu}$, and $\omega^a_{\phantom{a}b\mu}$ yield:
\begin{eqnarray}
0 &=& \frac{\delta L_{Matt}(h^a_{\phantom{a}\mu},\Phi,D_\nu\Phi)}{\delta \Phi},\\
0&=& \partial_\mu\omega^a_{\phantom{a}b\nu}-\partial_\nu\omega^a_{\phantom{a}b\mu}+\omega^a_{\phantom{a}e\mu}\omega^e_{\phantom{a}b\nu}-\omega^a_{\phantom{a}e\nu}\omega^e_{\phantom{a}b\mu},\label{zero-curvature}\\
\kappa \Theta_a^{\phantom{a}\mu}&=&
        h^{-1}f_T(T)\partial_v\left(hS_a^{\phantom{a}\mu\nu}\right)+f_{TT}(T)S_a^{\phantom{a}\mu\nu} \partial_v T \nonumber\\
        &&+\frac{1}{2}f(T)h_a^{\phantom{a}\mu} -f_T(T)T^b_{\phantom{a}a\nu}S_b^{\phantom{a}\mu\nu}-f_T(T)\omega^b_{\phantom{a}a\nu}S_b^{\phantom{a}\mu\nu},\label{EQ2b}\\
0&=&
        2\kappa h^{-1}( D_{\nu} \nu_a^{~b\mu \nu} ) +f_T(T)h^b_{\phantom{a}\nu} S_a^{\phantom{a}\mu\nu}.\label{hypermomentum} \end{eqnarray}
These equations determine the spin-connection, the coframe and the Lagrange multipliers, up to some gauge transformations. We shall regard the $f(T)$ theory as being defined by the resulting field equations which are {\it Lorentz covariant}.

\subsubsection{Interpretation of the field equations}

Equations \eqref{hypermomentum} are an underdetermined set of equations for the Lagrange multipliers, and hence determines a subset of these. These equations are not required to determine the actual dynamics of the coframe. It is equation \eqref{EQ2b} which yields the field equations for determining the coframe via $E_a^{~b} = \kappa h \Theta_a^{~b}$, where we define the right-hand side of equation \eqref{EQ2b} as $E_a^{\phantom{a}\mu}$ and define $E_{ab}=g_{bc}h^c_{\phantom{a}\mu}E_a^{\phantom{a}\mu}$

The vanishing of the non-metricity $Q_{abc} = -D_{c} g_{ab} =0 $ imposes the constraint that the connection one-form is anti-symmetric, $\bomega_{ab} = - \bomega_{ba}$; we will implicitly assume this hereafter.  The condition that $R^a_{~bcd}=0$ imposes a second differential constraint on the spin-connection and implies that the most general solution to equation \eqref{zero-curvature} is of the form: \begin{equation}
\omega^a_{\phantom{a}b\mu} = \Lambda^a_{\phantom{a}c}\partial_\mu\Lambda_{b}^{\phantom{a}{c}}\label{solution_omega}
\end{equation}
where $\Lambda^a_{\phantom{a}b}$ is some yet undetermined Lorentz transformation. 

There is a third constraint on $\bomega^a_{~b}$ arising from \eqref{EQ2b}, in addition to determining the coframe. Imposing invariance under infinitesimal Lorentz transformations,  the matter Lagrangian now requires that $$\Theta_{[ab]} = 0,$$ so that the canonical energy momentum is symmetric. Furthermore, this connects the metrical energy momentum $T_{ab}$ and the canonical energy momentum: $$T_{ab} = - \frac12 \frac{\delta L_{Matt}}{\delta g_{ab} } = \Theta_{(ab)}. $$

We can separate equations \eqref{EQ2b} into it symmetric and anti-symmetric parts. This symmetric, anti-symmetric and matter field equations describing $f(T)$ gravity are
\begin{eqnarray}
\kappa \Theta_{(ab)}&=&
        f_{TT}(T)S_{(ab)}^{\phantom{(ab)}\nu} \partial_v T+f_T(T)\GLC_{ab} + \frac{1}{2}g_{ab}\left(f(T)-Tf_T(T)\right),\label{temp1}\\
             0      &=& f_{TT}(T)S_{[ab]}^{\phantom{[ab]}\nu} \partial_v T,\label{temp2}\\
             0      &=& \frac{\delta L_{Matt}(h^a_{\phantom{a}\mu},\Phi,D_\nu\Phi)}{\delta \Phi}.\label{temp3}
\end{eqnarray}

\noindent where $\GLC_{ab} = \GLC_{\mu \nu} h_a^{~\mu} h_b^{~\nu}$ where $\GLC_{\mu \nu}$ is the Einstein tensor associated with the spacetime metric $g_{\mu \nu} = \eta_{ab} h^a_{~\mu} h^b_{~\nu}$.

For $f(T)_{} \neq T$, it can be shown that the variation of the gravitational Lagrangian by the flat spin-connection is equivalent to the anti-symmetric part of the field equations in equation \eqref{temp2}. This provides the third condition on $\omega^a_{~b}$ and thus if the field equations for the spin-connection are satisfied then the field equations for the coframe are guaranteed to be symmetric \cite{Hohmann:2018rwf}. This motivates the search for an appropriate connection.

\subsection{The choice of the spin-connection}


The selection of the spin-connection within the covariant formulation of the $f(T)$ theory complicates the search for solutions to the field equations. In the case of TEGR, where $f(T)=T$, equation \eqref{temp2} vanishes.  For TEGR, any initial ansatz for the coframe $h^a_{\phantom{a}\mu}$ will contain effects due to both inertia and gravitation.  Separating the gravitational effects from the inertial effects, for a given $h^a_{\phantom{a}\mu}$ we will suppose $e^a_{\phantom{a}\mu}$ denotes the frame where only the inertial effects are present and the gravitational effects are zero \cite{Aldrovandi_Pereira2013}.  We can then use this inertial frame to determine the spin-connection $\omega^a_{\phantom{a}b\mu}$ which encodes the inertial effects present in the coframe, since the torsion and curvature of the inertial coframe and associated spin-connection satisfy
\begin{equation}
T^a_{\phantom{a}\mu\nu}[e^a_{\phantom{a}\rho},\omega^a_{\phantom{a}b\mu}] = 0, \qquad R^a_{\phantom{a}b\mu\nu}[e^a_{\phantom{a}\rho}, \omega^a_{\phantom{a}b\mu}]=0.
\end{equation}
So for the inertial coframe, $e^a_{\phantom{a}\mu}$, we must determine the conditions under which this yields a trivial torsion tensor. The torsion tensor of an inertial coframe can be written in terms of the coefficients of anholonomy, $C^a_{~bc}$, arising from the Lie bracket of the frame basis $e^a_{~\mu}$ and the spin-connection, $\omega^a_{~bc}$:
\begin{equation}
T^a_{\phantom{a}bc}[e^a_{\phantom{a}\mu},\omega^a_{\phantom{a}b\mu}]=C^a_{\phantom{a}bc}[e^a_{\phantom{a}\mu}]+\omega^a_{\phantom{a}cb}-\omega^a_{\phantom{a}bc} = 0,
\end{equation}
We can use different combinations of indices to solve for $\omega^a_{\phantom{a}b\mu}$:
\begin{equation}
\omega^a_{\phantom{a}b\mu}=\frac{1}{2}e^c_{\phantom{a}\mu}\Bigl(C^{\phantom{a}a}_{b\phantom{a}c}[e^a_{\phantom{a}\mu}] + C^{\phantom{a}a}_{c\phantom{a}b}[e^a_{\phantom{a}\mu}] - C^a_{\phantom{a}bc}[e^a_{\phantom{a}\mu}]\Bigr).\label{conn1}
\end{equation}

The above spin-connection is always a valid connection to use for TEGR and can be used for more general $f(T)$ theories \cite{Krssak2017}. However, we still have the challenge of selecting the frame $e^a_{\phantom{a}\mu}$ which contains only the inertial effects.

\subsection{The covariant form of the $f(T)$ field equations}

Starting from equation \eqref{temp1}, the $f(T)$ field equations can be restated purely in a frame basis: 
\beq D_c (f_{T} S_a^{~bc}) + f_{T}[S_a^{~bc} T^d_{~cd} + \frac12 S_a^{~cd} T^b_{~cd} + S_d^{~bc} T^d_{~ca}] + \frac12 \delta_a^{~b} f(T) =  \kappa  \Theta_a^{~b} \eeq



\noindent We can define an associated rank four tensor as

\beq F_{ac}^{~~bd} =  D_c (f_{T} S_a^{~bd}) + f_{T}[S_a^{~bd} T^e_{~ce} - \frac12 S_a^{~e[b} T^{d]}_{~ce} + S_e^{~bd} T^e_{~ca}] + \frac16 \delta_a^{~[b} \delta_c^{~d]} f(T) \eeq



\noindent the field equations now occur as a contraction of $F_{ab}^{~~cd}$. We can classify $F_{ab}^{~~bd} \equiv F_a^{~d} = \kappa h \Theta_a^{~d}$ to distinguish between $f(T)$ solutions. 

As $F_{ac}^{~~bd}$ satisfies $F_{ab}^{~~(cd)} = 0$ there are only three traces:
\beq E_a^{~b} = F_{ac}^{~~bc},~~H_a^{~b} = F_{ca}^{~~bc} \text{ and } I^{ab} = F_{cd}^{~~ab} g^{cd} = F^{c~ab}_{~c}. \eeq
\noindent Furthermore this rank four tensor has only one complete  trace since
\beq E_a^{~a} = E,~~H_a^{~a} = -E_a^{~a} = -E \text{ and } I^a_{~a} = 0. \eeq
\noindent In a similar manner, we can classify the irreducible parts of $F_{ad}^{~~bc}$ itself and this potentially can give insight into the physical interpretation of $f(T)$ solutions. For example, since the trace, $\Theta_a^{~b} =  (\kappa h )^{-1} F_{a}^{~b}$, is involved in the field equations and is associated with the energy-momentum tensor, one could consider the completely trace-free object with all indices lowered:
\beq \begin{aligned} \bar{C}_{abcd} & \equiv F_{abcd} + 2 X_{a[c} g_{d]b} + 2 Y_{b[c}g_{d]a} + g_{ab} Z_{[cd]} - \frac{E}{12} (g_{ac} g_{bd} - g_{ad} g_{bc}), \end{aligned} \eeq

\noindent where 
\beq \begin{aligned} X_{ab} &= -\frac38 E_{ab} + \frac18 H_{ab} - \frac{1}{24} E_{[ab]} - \frac{1}{24} H_{[ab]} - \frac{1}{12} I_{ab} + \frac{E}{8} g_{ab}, \\
Y_{ab} &= -\frac38 H_{ab} + \frac18 E_{ab} - \frac{1}{24} E_{[ab]} - \frac{1}{24} H_{[ab]} - \frac{1}{12} I_{ab} - \frac{E}{8} g_{ab} \\
Z_{[ab]} &= - \frac16 E_{[ab]} - \frac16 H_{[ab]} - \frac{1}{3} I_{ab}. \end{aligned} \eeq

\noindent This tensor can be treated as an analogue of the Weyl tensor (it is of interest to determine if this describes a similar aspect of the gravitational field). The manifold can now be classified via the algebraic classification (e.g., alignment classification) of $\Theta_a^{~b}$ and $\bar{C}_{ab}^{~~cd}$, in analogy with the Ricci and Weyl tensor in the Riemannian case.



\newpage
\section{The equivalence problem for teleparallel geometries} \label{sec:CKalg}

To determine the equivalence of two teleparallel geometries,, we must consider the Lorentz frame bundle of orthonormal frames, $\mathcal{F}(M)$. We will raise the coframe basis, $\{ \bh^a \}$ on $M$ to $\mathcal{F}(M)$ and denote it as $\{ {\bf \tilde{h}}^a \}$. Then, in a similar manner to the coframe, we can raise the spin-connection $\omega^a_{~bc}$ to act as a connection on the frame manifold, $\tilde{\omega}^a_{~bc}$, and define {\it the connection one-forms} as: 
$$\tilde{\bomega}^a_{~b} = \tilde{\omega}^a_{~bc} {\bf \tilde{h}}^c. $$ 
For $\mathcal{F}(M)$ there exists a uniquely defined basis of the cotangent space of $\mathcal{F}(M)$, $\{{\bf \tilde{h}}^a, \tilde{\bomega}^a_{~b}\}$, since the cotangent space can be decomposed into the direct sum of the 4D horizontal subspace, spanned by the canonical one-forms $\{ {\bf \tilde{h}}^a \}$, and the 6-dimensional vertical subspace, spanned by the linear connection one-forms $\{ \tilde{\bomega}^a_{~b}\}$ defined on $\mathcal{F}(M)$ \cite{Kobayashi_Nomizu1996}. The definition of equivalence can be restated in the framework of the frame bundle manifold: $M$ and $M'$ are the same if and only if their corresponding frame bundles are the same \cite{fon1992, fon1996, kramer}. 

Thus, $M$ and ${M}'$ are locally equivalent when there is a diffeomorphism, $F$, from $\mathcal{F}(M)$ onto ${\mathcal{F}'}(M')$ such that the pullback satisfies: \beq F^* {{\bf \tilde{h}}}^{'a} = {\bf \tilde{h}}^a \text{ and } F^*{\tilde{\bomega}}^{'a}_{~~b} = \tilde{\bomega}^a_{~b}. \label{Fmap} \eeq   
\noindent For metric-based theories the first condition is sufficient to determine the equivalence of $M$ and ${M}'$ as the connection is computed in terms of $h^a_{~\mu}$ and their coordinate derivatives. However, for theories where the connection is not metric compatible or symmetric, the second condition is necessary to ensure that the connection is invariant under a pullback as well. 

With this new manifold, the solution to the equivalence problem can be obtained using Cartan's result on the equivalence of one-forms with regard to the set $\{ {\bf \tilde{h}}^a, \tilde{\bomega}^a_{~b} \}$. However, we would like to work with quantities on $M$, we will use Cartan's structure equations for a general connection:
\beq \begin{aligned}  d {\bf \tilde{h}}^a &= -\tilde{\bomega}^a_{~b} \wedge {\bf \tilde{h}}^b + \frac12 T^a_{~bc} {\bf \tilde{h}}^b \wedge {\bf \tilde{h}}^c, \\
d\tilde{\bomega}^a_{~b} &= -\tilde{\bomega}^a_{~c} \wedge \tilde{\bomega}^c_{~b} + \frac12 R^a_{~bcd} {\bf \tilde{h}}^c \wedge {\bf \tilde{h}}^d, \end{aligned} \label{CartanStructure} \eeq

\noindent where $T^a_{~bc}$ and $R^a_{~bcd}$ denote the torsion tensor and curvature tensor, respectively, to determine scalars that can be used on $M$. For those connections where $\tilde{\bomega}^a_{~b}$ leads to vanishing curvature, this shows that the torsion tensor can be used to compare the two frame bundles. Applying the exterior derivative to the components of the torsion two-form $T^a = \frac12 T^a_{~bc} {\bf \tilde{h}}^b \wedge {\bf \tilde{h}}^c$ on $\mathcal{F}(M)$ lead to quantities involving the covariant derivative of the torsion tensor:
\beq d T^a_{~bc} = T^a_{~bc|d} \bh^d - T^d_{~bc} \tilde{\bomega}^a_{~d} + T^a_{~dc} \tilde{\bomega}^d_{~b} + T^a_{~bd} \tilde{\bomega}^d_{~c}. \label{dtorsion} \eeq

Solving for $T^a_{~bc|d}$ and applying the exterior derivative to the components yields expressions in terms of the second order covariant derivatives of the torsion tensor. This procedure can be repeated for higher order derivatives of the torsion tensor. Therefore, we can determine the equivalence of $M$ and ${M}'$ by considering the covariant derivatives of the torsion tensor when viewed as functions on $\mathcal{F}(M)$ and $\mathcal{F}'({M}')$, respectively. We note that for a general Riemann-Cartan geometry, $R_{abcd}$, $T_{abc}$ and their covariant derivatives can always be isolated using the identities in \eqref{CartanStructure} \cite{fon1992}.

We will denote the components of the torsion tensor and its covariant derivatives up to order $q$ for a manifold $M$ relative to a chosen orthonormal frame on $M$ as $\mathcal{T}^q$. If there are $k$ elements in a maximal set of functionally independent invariants on $\mathcal{F}(M)$, we will index such invariants as $I^\alpha,~~\alpha = 1, \hdots, k$ and we will call the set of indices of the corresponding $T^a_{~bc}$ their {\it index basis} and denote this set as $\mathcal{A}$. 

Employing the frame bundle perspective, by denoting $p$ as the last derivative at which a new functionally independent quantity arises in $T^p$ we have the following theorem for comparing teleparallel geometries when $n=4$ \cite{fon1992}: 

\begin{thm}
Let $M$ and ${M}'$ be spacetimes of differentiability $C^{12}$, $x$ be a regular point of $M$, $\bh^a $ be a frame at $x^\mu$ and $\bomega^a_{~b}$ be a spin-connection at $x^\mu$, and similarly ${\bomega}^{'a}_{~~b}$ at ${x'}^\mu$ for ${M}'$. There is a diffeomorphism which maps $(x^\mu,\bh^a , \bomega^a_{~b})$ to $({x'}^\mu, {\bh'}^a, {\bomega}^{'a}_{~~b})$ if and only if $\mathcal{T}^{p+1}$ for $M$ is such that 
\begin{enumerate}
\item $\mathcal{A}$ indexes ${I}^{'\alpha}$ which are functionally independent in $\mathcal{F}'({M}')$, 
\item $I^\alpha(x^\mu,\bh,\bomega^a_{~b}) = {I}^{'\alpha}({x'}^\mu, {\bh'}, {\bomega}^{'a}_{~~b})$ for $\alpha = 1, \hdots, k$ and
\item All other components of $\mathcal{T}^{p+1}$ and $\mathcal{T}^{'p+1}$ expressed in terms of the $I^\alpha$ and ${I}^{'\alpha}$, respectively, are the same for $M$ and ${M}'$.
\end{enumerate}
\end{thm}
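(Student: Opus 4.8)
The plan is to recognise the statement as a transplant of Cartan's equivalence theorem for coframes ($\{e\}$-structures) from $M$ to the frame bundle $\mathcal{F}(M)$. On $\mathcal{F}(M)$, which is $10$-dimensional for $n=4$, the set $\{{{\bf \tilde{h}}}^a,\tilde{\bomega}^a_{~b}\}$ is a globally defined coframe, and, by the construction preceding the theorem together with \eqref{Fmap}, a pointed equivalence of teleparallel geometries --- a diffeomorphism sending $(x^\mu,\bh^a,\bomega^a_{~b})$ to $({x'}^\mu,{\bh'}^a,{\bomega}^{'a}_{~~b})$ --- is precisely a local diffeomorphism $F:\mathcal{F}(M)\to\mathcal{F}'(M')$ through the corresponding fibre points that identifies these two coframes. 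Cartan's theorem then reduces the existence of such an $F$ to the agreement of the \emph{structure functions} of the two coframes together with finitely many of their derivatives. The first step is therefore to isolate these structure functions: from the structure equations \eqref{CartanStructure}, and using that a teleparallel connection has $R^a_{~bcd}=0$, the only structure functions appearing in $d{{\bf \tilde{h}}}^a$ and $d\tilde{\bomega}^a_{~b}$ are the torsion components $T^a_{~bc}$; differentiating them via \eqref{dtorsion} reproduces exactly $T^a_{~bc|d}$, and iterating yields all of $\mathcal{T}^q$. Hence the structure functions of $\{{{\bf \tilde{h}}}^a,\tilde{\bomega}^a_{~b}\}$ are, up to relabelling, the torsion tensor and its covariant derivatives regarded as functions on $\mathcal{F}(M)$.

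Next I would invoke regularity to make the comparison finite and to set up the ``only if'' direction. At a regular point the functional rank $t_q$ of $\mathcal{T}^q$ and the isotropy groups $H_q$ are locally constant; since the $t_q$ are non-decreasing and bounded by $\dim\mathcal{F}(M)$ while the $H_q$ are nested, both chains stabilise, so $p$ --- the last order at which a new functionally independent invariant appears --- is well defined and $\mathcal{T}^{p+1}$ already carries the full functional content of $\mathcal{T}^q$ for every $q$; examining order $p+1$ rather than $p$ is what certifies that nothing new arises at the next step. Choosing a maximal functionally independent set $\{I^\alpha\}_{\alpha=1}^k$ with index basis $\mathcal{A}$, every component of $\mathcal{T}^{p+1}$ becomes a definite function of the $I^\alpha$. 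If the pointed diffeomorphism exists, it lifts to $F$ with $F^*{{\bf \tilde{h}}}^{'a}={{\bf \tilde{h}}}^a$ and $F^*{\tilde{\bomega}}^{'a}_{~~b}=\tilde{\bomega}^a_{~b}$; since $F^*d=dF^*$ it carries the structure functions of $\mathcal{F}'(M')$ to those of $\mathcal{F}(M)$, and reading this equality off order by order yields conditions (1)--(3). This settles the ``only if'' direction.

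The substantive direction --- and the step I expect to be the main obstacle --- is the converse, building $F$ out of (1)--(3). Following the frame-bundle approach of \cite{fon1992}, I would pass to the product $\mathcal{F}(M)\times\mathcal{F}'(M')$ and consider the Pfaffian system cut out by the $20$ one-forms ${{\bf \tilde{h}}}^a-{{\bf \tilde{h}}}^{'a}$ and $\tilde{\bomega}^a_{~b}-{\tilde{\bomega}}^{'a}_{~~b}$. Computing their exterior derivatives with \eqref{CartanStructure} shows that $d$ of the system lies in the ideal it generates \emph{provided} the torsion structure functions of the two factors agree once both are expressed through the common coordinates $I^\alpha$ --- which is exactly the content of (1)--(3). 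The Frobenius theorem then yields a $10$-dimensional integral manifold through the prescribed pair of fibre points; this integral manifold projects as a local diffeomorphism onto each factor, so composing these projections gives the desired $F$ with $F^*{{\bf \tilde{h}}}^{'a}={{\bf \tilde{h}}}^a$, $F^*{\tilde{\bomega}}^{'a}_{~~b}=\tilde{\bomega}^a_{~b}$. The delicate bookkeeping is twofold: one must check that agreement only up to order $p+1$ already forces the structure functions to coincide on the whole common leaf --- this is where constancy of $t_q$ and $H_q$ at a regular point enters --- and that the construction consumes only derivatives of the coframe and spin-connection lying within the $C^{12}$ hypothesis, i.e.\ that the relevant value of $p$ obeys the Karlhede-type bound guaranteeing $p+1$ stays inside the differentiability class. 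Finally I would remark why both halves of \eqref{Fmap} are genuinely required here, unlike the metric case: the teleparallel spin-connection is neither the Levi-Civita connection of the induced metric nor otherwise fixed by the coframe, so the second condition of \eqref{Fmap} is not automatic on $M$ --- but on $\mathcal{F}(M)$ the connection one-forms are part of the canonical coframe, so the argument above disposes of it with no extra work.
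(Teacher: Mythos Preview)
Your proposal is correct and follows the standard Cartan approach that the paper's preparatory discussion sets up; however, the paper itself does not supply a proof of this theorem. It states the result with a citation to \cite{fon1992} after laying out the frame-bundle framework (the canonical coframe $\{{\bf \tilde{h}}^a,\tilde{\bomega}^a_{~b}\}$ on $\mathcal{F}(M)$, the structure equations \eqref{CartanStructure}, and the identity \eqref{dtorsion}), so what you have written is essentially the omitted argument rather than an alternative to one given in the text.

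Your reconstruction matches both the paper's scaffolding and the approach of the cited reference: identifying the equivalence problem with Cartan's theorem for $\{e\}$-structures on the $10$-dimensional frame bundle, reading off the structure functions as the torsion components (since $R^a_{~bcd}=0$), iterating \eqref{dtorsion} to generate $\mathcal{T}^q$, invoking regularity to stabilise at order $p$, and running Frobenius on the graph Pfaffian system on $\mathcal{F}(M)\times\mathcal{F}'(M')$ for the converse. The two caveats you flag --- that agreement through order $p+1$ propagates along the integral leaf, and that the $C^{12}$ hypothesis accommodates the required number of derivatives via the Karlhede-type bound --- are exactly the points that \cite{fon1992} handles in the Riemann--Cartan setting and that the paper is implicitly importing. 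Your closing remark on why both conditions in \eqref{Fmap} are needed in the teleparallel case (the spin-connection is not determined by the coframe) is also in line with the paper's discussion just before \eqref{Fmap}.
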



We will consider a specialized version of the algorithm proposed for Riemann-Cartan spacetimes \cite{fon1992} which has been implemented in CLASSI \cite{fon1996}. This has been previously applied to non-covariant teleparallel gravity \cite{fon2002}. The pure tetrad teleparallel gravity condition leads to a significantly restricted equivalence algorithm for these torsion theories as the only permitted Lorentz transformations are those with constant parameters, the so-called {\it global Lorentz transformations}. 

\subsection{ The modified Cartan-Karlhede algorithm for torsion}
Again, using $\mathcal{T}^q$ to denote the set of components of the torsion tensor and its the derivatives of the torsion tensor up to the $q$th order, i.e., $\mathcal{T}^q \equiv \{ T_{abc}, T_{abc|d_1}, \ldots T_{abc|e1 \ldots e_q} \}$, then for any teleparallel geometry the Cartan-Karlhede algorithm is then:

\begin{enumerate}
\item Set the order of differentiation $q$ to 0.
\item Calculate $\mathcal{T}^q$.
\item Determine the canonical form of the $q$-th covariant derivative of the torsion tensor.
\item Fix the frame as much as possible, using this canonical form, and record the remaining frame transformations that preserve this canonical form (the group of allowed frame transformations is the {\it linear isotropy group $H_q$}). The dimension of $H_q$ is the dimension of the remaining {\it vertical} freedom of the frame bundle.
\item Find the number $t_q$ of independent functions of spacetime position in $\mathcal{T}^q$ in the canonical form. This tells us the remaining {\it horizontal} freedom.
\item If the dimension of $H_q$ and number of independent functions are the same as in the previous step, let $p+1=q$, and the algorithm terminates; if they differ (or if $q=0$), increase $q$ by 1 and go to step 2. 
\end{enumerate}

\noindent The resulting non-zero components of $\mathcal{T}^p$ constitute the {\it Cartan invariants} and we will denote them as $\mathcal{T} \equiv \mathcal{T}^{p+1}$ so that 
\beq \mathcal{T} = \{ T_{abc}, T_{abc|d_1}, \ldots T_{abc|d_1 \ldots d_{p+1}} \}. \label{CartanSet} \eeq 
We will refer to the invariants constructed from, or equal to, Cartan invariants of any order as {\it extended Cartan invariants}. We have not specified the form of the invariant frame used in the algorithm. We will discuss how {\it the frame can be fixed invariantly at zeroth and first order in the next section}. 

For sufficiently smooth frames and spin-connection, the result of the algorithm is a set of scalars providing a unique local geometric characterization of the teleparallel geometry. The 4D spacetime is characterized by the canonical form used, the two discrete sequences arising from the successive linear isotropy groups and the independent function counts, and the values of the (non-zero) Cartan invariants. As there are $t_p$ essential spacetime coordinates, the remaining $4-t_p$ are ignorable, and so the dimension of the affine frame symmetry isotropy group (hereafter called the isotropy group) of the spacetime will be $s=\dim(H_p)$ and the affine frame symmetry group has dimension: \beq r=s+4-t_p. \label{rnumber} \eeq

\subsection{Scalar polynomial torsion  invariants}
As in metric based theories, there is an alternative set of invariants that are available from the torsion tensor and its covariant derivatives. These invariants are called the {\it scalar polynomial (torsion) invariants} (SPIs) and are constructed from full contractions of tensors built out of copies of the torsion tensor and its covariant derivatives with respect to the spin-connection. An example of an SPI is the torsion scalar $T$. We will denote the set of such SPIs as $\mathcal{I}_T$. While this set is functionally infinite, there is a finite dimensional basis of SPIs. It is a question of interest as to how the set $\mathcal{I}_T$ is related to the set of SPIs formed from the curvature tensor and its covariant derivatives, $\mathcal{I}_R$ \cite{CSI4a}. 

As a generalization of $f(T)$ theories, we can consider new gravitational Lagrangians constructed by including other SPIs: 
\beq \mathcal{L}_{Grav} (h^a_{~\mu}, \omega^a_{~b \mu} ) = \frac{h}{2	\kappa} f( T, T_1, T_2, \ldots T_n), T_i \in \mathcal{I}_T,~ i \in [1,n]. \eeq 
\noindent By including a matter Lagrangian and varying the resulting action, we can generate new teleparallel gravity theories that are distinct from the $f(T)$ theories.

A natural question to ask is whether the SPIs are able to uniquely characterize teleparallel geometries, or if there are cases where the SPIs are unable to distinguish between teleparallel geometries. For example, it is possible that there are teleparallel geometries where all SPIs vanish, and hence have the $\text{VSI}$ property \cite{{Higher}}. For such $\text{VSI}$ torsion geometries, it would be impossible to distinguish them from Minkowski space with SPIs. A related question is when Cartan invariants are uniquely related to SPIs \cite{Coley:2009eb} and when the frame bundle can be completely characterized by SPIs. See the PPGW example later in the paper.

\subsection{Affine frame symmetries} \label{subsec:FSs}

\subsubsection{Affine frame symmetries in general relativity}

To clarify the formula for the dimension of group of affine frame symmetries, we will briefly review a well-known result of Killing vectors in GR. We will also assume that the isotropy group of $M$ is trivial. If a vector field ${\bf X}$ satisfies 
\beq \mathcal{L}_{{\bf X}} g_{ab} = 0, \label{def:iso} \eeq
then since the Levi-Civita connection is expressed in terms of the metric and its coordinate derivatives, the ``Lie derivative of the connection'' will vanish automatically, implying that they are affine collineations as well \cite{fon1992}. The Lie derivative of the curvature tensor and its covariant derivatives with respect to ${\bf X}$ will also vanish:
\beq \begin{aligned} \mathcal{L}_X R_{abcd} &= 0,  \\ \mathcal{L}_X R_{abcd;e_1\dots e_q} &= 0,~q\in[1,p].\end{aligned} \label{RiemSym} \eeq

If instead we assume that equation \eqref{RiemSym} is satisfied for a vector field {\bf X}, then relative to the invariant coframe determined by the Cartan-Karlhede algorithm, $\bh^a$, a set of linearly independent vector fields satisfy \eqref{RiemSym}. It follows that any invariant, $I \in \mathcal{R}$ must be annihilated under frame differentiation, i.e., ${\bf X}[I] = 0$. This implies that there exists a set of Killing vector fields, $\mathcal{K}$ for which  ${\bf X}' \in \mathcal{K}$ coincides with ${\bf X}$ at any point \cite{Hervik2011, Hervik:2010rg}. Furthermore, if \eqref{RiemSym} holds for a given vector field ${\bf X}$, then using rectifying coordinates for ${\bf X}$ requires that \beq  \mathcal{L}_{{\bf X}} \bh^a = 0. \label{LDframe} \eeq As the Levi-Civita connection is expressed in terms of $h^a_{~\mu}$ and their derivatives, it follows that the Levi-Civita connection relative to the frame basis will be annihilated under Lie differentiation so that \eqref{Intro:FS2} is satisfied and the vector field ${\bf X}$ is an affine frame symmetry.

\subsubsection{Affine frames symmetries in teleparallel geometries}

To determine the equivalence of the respective frame bundles of the two teleparallel geometries, we will compare the set $\mathcal{T}$ in equation \eqref{CartanSet} for each. If we are interested in diffeomorphisms, $\Phi$, from the manifold to itself that preserves the form of the metric $g = \eta_{ab} \bh^a \bh^b$ along with the associated structure of the frame bundle $\mathcal{F}(M)$, then we must consider the mappings acting on the uniquely defined linearly independent basis of the cotangent space of $\mathcal{F}(M)$, $\{ {\bf \tilde{h}}^a, \tilde{\bomega}^a_{~b} \}$ so that 
\beq \Phi^* {\bf \tilde{h}}^a = {\bf \tilde{h}}^a \text{ and } \Phi^* \tilde{\bomega}^a_{~b} = \tilde{\bomega}^a_{~b}. \eeq 
Equivalently, we can determine the set of infinitesimal generators, ${\bf X}$,
\beq  \mathcal{L}_{{\bf X}} {\bf \tilde{h}}^a = \mathcal{L}_{{\bf X}} \tilde{\bomega}^a_{~b} = {\bf 0}. \eeq
\noindent As the torsion tensor and its covariant derivatives are expressed in terms of the frame and the connection then the Lie derivative of ${\bf X}$ satisfies:
\beq \mathcal{L}_{{\bf X}} T_{abc|e_1 \ldots e_q}=0,~~q \in [1,p]\eeq
\noindent  which motivates the definition of an affine frame symmetry.

\subsection{Symmetries in flat geometries with torsion} \label{subsec:TPSymmetries}
We have generated a set of invariants that characterize a teleparallel geometry, determined the isotropy group and have fixed the frame as much as possible using the torsion tensor and its covariant derivatives. From the number of functionally independent invariants we can determine the number of diffeomorphisms of the manifold which preserves the form of the torsion tensor and its covariant derivatives in analogy with the isometries of a metric in GR which preserves the form of the Riemann tensor and its covariant derivatives. 

When the equivalence algorithm results in a frame where all frame freedom is fixed, the linear isotropy group is trivial and the coframe is called an {\it invariant coframe}. An invariant coframe allows us to determine the affine frame symmetries. For example, if $\Phi$ is an affine frame symmetry and $\{ \bh^a\}$ is an invariant coframe then, 
\beq \Phi^{*} \bh^a = \bh^a. \nonumber \eeq

\noindent which implies that the infinitesimal generator, ${{\bf X}}$, of $\Phi$ must satisfy
\beq \mathcal{L}_{{\bf X}} \bh^a = 0. \eeq

If the solution admits a non-trivial isotropy group, we are able to {\it prolong} the manifold \cite{olver1995} to produce a larger manifold and determine an invariant coframe. This is achieved by appending the parameters of the remaining Lorentz frame transformations that constitute the isotropy group of dimension $s = dim(H_p)$, and using the Maurer-Cartan one-forms for each of the frame transformation infinitesimal generators we find the frame basis for the new manifold satisfies \cite{olver1995}: 
\beq \mathcal{L}_X \bh^A = 0,~~A \in [1, n+s].  \eeq

Instead of prolongation we can continue to work with the original manifold. The effect of a pullback of a diffeomorphism is equivalent to the action of the Lorentz frame transformation subgroup $H_p$ acting on the frame:
\beq \Phi^{*} \bh^a = \Lambda^a_{~b} \bh^b,~~ \Lambda \in H_p. \nonumber \eeq 
\noindent the metric is insensitive to this fact since,
\beq \Phi^* \eta_{ab} \bh^a \bh^b = \eta_{ab} \bh^a \bh^b \eeq 
\noindent and so the Lie derivative of the metric yields the Killing equations, $\mathcal{L}_{{\bf X}} g_{ab} = 0$.  The metric does not detect the remaining linear isotropy; however, we can use the Killing equations to determine the set of potential affine frame symmetries. 

To determine those Killing vector fields which preserve the structure of the geometry we must return to the frame bundle, $\mathcal{F}(M)$ and the uniquely defined basis of the cotangent space, $\{{\bf \tilde{h}}^a, \tilde{\bomega}^a_{~b}\}$. For this basis, the induced action of $\Phi$ on $\mathcal{F}(M)$ satisfies \eqref{Fmap} and the infinitesimal generator of $\Phi$, ${\bf X}$ satisfies
\beq \mathcal{L}_{{\bf X}} \tilde{\bomega}^a_{~b} = {\bf 0}. \eeq

\noindent While this condition on the connection one-form is originally defined on the frame bundle, we can project it down to the original manifold, so that \cite{fon1992,kramer}:
\beq \mathcal{L}_{{\bf X}} \bomega^a_{~b} = {\bf 0}. \eeq
\noindent where $\bomega^a_{~b}$ is defined in terms of the frame determined from the algorithm. 
 
From these observations we can now provide definite criteria to determine affine frame symmetries for a teleparallel geometry:

\begin{prop} For any solution of teleparallel gravity, an affine frame symmetry is a Killing vector field, ${\bf X}$, 
(i.e.,  $\mathcal{L}_{{\bf X}} {\bf g} = {\bf 0}$) that also annihilates the spin-connection relative to the frame determined by the Cartan-Karlhede algorithm: 
\beq \mathcal{L}_{{\bf X}} \omega^a_{~bc}= 0. \nonumber \eeq
\end{prop}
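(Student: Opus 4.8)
The plan is to establish the two implications packaged in the proposition: that every affine frame symmetry in the sense of \eqref{Intro:FS2} is necessarily a Killing vector of the derived metric $g_{ab} = \eta_{ab}\bh^a\bh^b$, and conversely that a Killing vector which additionally annihilates $\omega^a_{~bc}$ (relative to the Cartan--Karlhede frame) is an affine frame symmetry. The first direction is immediate: if $\mathcal{L}_{\bf X}\bh^a = 0$, then since $\eta_{ab}$ is a constant matrix, $\mathcal{L}_{\bf X}(\eta_{ab}\bh^a\bh^b) = \eta_{ab}(\mathcal{L}_{\bf X}\bh^a)\bh^b + \eta_{ab}\bh^a(\mathcal{L}_{\bf X}\bh^b) = 0$, so ${\bf X}$ is Killing; and $\mathcal{L}_{\bf X}\omega^a_{~bc}=0$ is one of the two defining conditions, so nothing further is needed. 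Thus the content of the proposition is really the converse, and the reduction of the defining condition $\mathcal{L}_{\bf X}\bh^a = 0$ to the weaker-looking Killing condition $\mathcal{L}_{\bf X}g_{ab}=0$ once $\mathcal{L}_{\bf X}\omega^a_{~bc}=0$ is assumed.

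For the converse I would argue as follows. Suppose ${\bf X}$ is Killing and $\mathcal{L}_{\bf X}\omega^a_{~bc}=0$, where the frame $\bh^a$ is the invariant (or, after prolongation, partially invariant) coframe produced by the algorithm. The Killing condition tells us that the flow $\Phi_t$ of ${\bf X}$ is a metric isometry, so $\Phi_t^*\bh^a = \Lambda^a_{~b}(t)\,\bh^b$ for some curve of Lorentz transformations valued in the residual isotropy group $H_p$ — this is exactly the point made in Section~\ref{subsec:TPSymmetries}, that the metric ``does not detect the remaining linear isotropy.'' Differentiating at $t=0$ gives $\mathcal{L}_{\bf X}\bh^a = \lambda^a_{~b}\,\bh^b$ with $\lambda^a_{~b}$ in the Lie algebra $\mathfrak{h}_p$ of $H_p$. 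The task is to show $\lambda^a_{~b}=0$. Here I would invoke the transformation law for the spin-connection under a frame rotation: under $\bh^a \mapsto \Lambda^a_{~b}\bh^b$ one has $\omega^a_{~b} \mapsto \Lambda^a_{~c}\omega^c_{~d}(\Lambda^{-1})^d_{~b} + \Lambda^a_{~c}d(\Lambda^{-1})^c_{~b}$; linearizing along the flow, $\mathcal{L}_{\bf X}\bomega^a_{~b} = [\lambda,\bomega]^a_{~b} - d\lambda^a_{~b}$ (schematically). Since $\mathcal{L}_{\bf X}\bomega^a_{~b}=0$ by hypothesis, we get the constraint $d\lambda^a_{~b} = [\lambda,\bomega]^a_{~b}$. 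Now the key input is that the algorithm has fixed the frame \emph{invariantly}: the residual freedom $H_p$ is precisely the stabilizer of the full Cartan invariant set $\mathcal{T}$, so any frame rotation generated by $\lambda \in \mathfrak{h}_p$ must also annihilate every component of $\mathcal{T}$. Combining $\mathcal{L}_{\bf X}\bh^a = \lambda^a_{~b}\bh^b$, $\mathcal{L}_{\bf X}\bomega^a_{~b}=0$, and the fact that ${\bf X}$ annihilates all functionally independent Cartan invariants (because $\lambda \in \mathfrak{h}_p$ acts trivially on them), one runs through the structure equations \eqref{CartanStructure} and \eqref{dtorsion}: these express $T^a_{~bc}$, $T^a_{~bc|d}$, etc.\ purely in terms of $\{\bh^a,\bomega^a_{~b}\}$, so $\mathcal{L}_{\bf X}$ of each of them is forced to be the algebraic action of $\lambda$, which vanishes. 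Propagating through all orders up to $p+1$ shows $\lambda$ lies in the intersection of all the isotropy subalgebras $H_0 \supseteq H_1 \supseteq \cdots \supseteq H_p$, hence $\lambda \in \mathfrak{h}_p$; but by construction $\mathfrak{h}_p$ is exactly the subalgebra already quotiented out when defining the invariant frame, and the extra condition $\mathcal{L}_{\bf X}\omega^a_{~bc}=0$ kills the remaining ambiguity, forcing $\lambda=0$ and hence $\mathcal{L}_{\bf X}\bh^a=0$.

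With $\mathcal{L}_{\bf X}\bh^a = 0$ established, the rest is automatic: $\mathcal{L}_{\bf X}\omega^a_{~bc}=0$ holds by hypothesis, so \eqref{Intro:FS2} is satisfied and ${\bf X}$ is an affine frame symmetry; and then by the remarks following \eqref{Intro:FS2} (``It can be shown that \eqref{Intro:FS2} is equivalent to \eqref{Intro:FS1}''), ${\bf X}$ also annihilates the torsion and all its covariant derivatives. I would present the argument so that the frame-bundle picture of Section~\ref{sec:CKalg} does the heavy lifting: the pair $\{\tilde{\bh}^a,\tilde{\bomega}^a_{~b}\}$ is a rigid coframe on $\mathcal{F}(M)$, a diffeomorphism of $M$ preserving the geometry lifts to one preserving this rigid coframe, and on a manifold with a rigid coframe the vanishing of a Lie derivative of the coframe is forced by the vanishing of its action on the structure functions — which here are built from $\mathcal{T}$.

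The main obstacle I anticipate is making precise the step ``$\lambda \in \mathfrak{h}_p$ and $\mathcal{L}_{\bf X}\omega^a_{~bc}=0$ together force $\lambda = 0$'' without circularity. One has to be careful that ``the frame determined by the Cartan--Karlhede algorithm'' genuinely refers to the prolonged, fully-fixed invariant coframe (on $M$ itself the isotropy may be non-trivial, and then $\lambda$ need not vanish as a section of $\mathrm{End}(TM)$ — it vanishes only after passing to $\mathcal{F}(M)$ or its prolongation). The cleanest route is probably to phrase everything downstairs on $M$ but acknowledge that $\mathcal{L}_{\bf X}\omega^a_{~bc}=0$ is exactly the condition needed to lift ${\bf X}$ uniquely to the frame bundle in a way compatible with \eqref{Fmap}, after which rigidity of $\{\tilde{\bh}^a,\tilde{\bomega}^a_{~b}\}$ finishes the job; a careful reader should be pointed to \cite{fon1992} for the frame-bundle rigidity statement that we are using essentially as a black box.
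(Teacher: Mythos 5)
Your forward direction and your overall frame-bundle framing coincide with what the paper actually does: the proposition there carries no separate formal proof, but is read off from the preceding observations that the pullback of an affine frame symmetry acts on the Cartan--Karlhede coframe only through the residual isotropy group $H_p\subset SO(1,3)$ (hence leaves $g=\eta_{ab}\bh^a\bh^b$ invariant, giving the Killing condition), and that preservation of the connection one-forms $\tilde{\bomega}^a_{~b}$ on $\mathcal{F}(M)$ projects down to $\mathcal{L}_{\bf X}\bomega^a_{~b}=0$ relative to that frame, with \cite{fon1992,kramer} cited for the projection. So far your proposal and the paper agree.

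The genuine gap is in your converse, at the step ``the extra condition $\mathcal{L}_{\bf X}\omega^a_{~bc}=0$ kills the remaining ambiguity, forcing $\lambda=0$.'' That conclusion is false whenever $H_p$ is non-trivial, and the paper's notion of affine frame symmetry is built precisely to include Killing vectors with $\lambda\neq 0$: the dimension count $r=s+4-t_p$ in \eqref{rnumber} adds $s=\dim H_p$, and in examples A--D (e.g.\ FLRW, where the $SO(3)$ isotropy ``is inherited by the spin-connection'') the rotational Killing vectors annihilate the spin-connection yet genuinely rotate the spatial triad, so $\mathcal{L}_{\bf X}\bh^a=\lambda^a_{~b}\bh^b$ with $\lambda\in\mathfrak{h}_p$ non-zero. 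Forcing $\lambda=0$ on $M$ would exclude exactly these isotropy-generating symmetries and contradict \eqref{rnumber}. Your closing caveat correctly identifies the resolution (full invariance of the frame holds only after prolongation, i.e.\ $\mathcal{L}_{\bf X}\bh^A=0$ for $A\in[1,n+s]$ on the enlarged manifold, or equivalently preservation of the rigid coframe $\{{\bf \tilde{h}}^a,\tilde{\bomega}^a_{~b}\}$ upstairs), but the body of your argument still asserts $\lambda=0$ downstairs, so as written it proves the statement only when the linear isotropy is trivial. The fix is to stop the argument at $\lambda\in\mathfrak{h}_p$ together with $\mathcal{L}_{\bf X}\bomega^a_{~b}=0$, which is exactly the criterion the proposition encodes, rather than pushing to $\mathcal{L}_{\bf X}\bh^a=0$ on $M$.
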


We conclude this section with a simple proof to show that there are no teleparallel geometries which admit $SO(1,3)$ as an isotropy group except Minkowski space, this confirms and extends the result of \cite{HJKP2018}.

\begin{thm} \label{thm:MaxSymFS}
There are no teleparallel geometries admitting a maximal group of affine frame symmetries other than Minkowski space.
\end{thm}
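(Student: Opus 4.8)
The plan is to reduce the statement to a short algebraic fact about $SO(1,3)$-invariant rank-three tensors. First I would observe that, for a $4$-dimensional teleparallel geometry, the dimension of the affine frame symmetry group is $r = s + 4 - t_p$ with $s = \dim(H_p)\le \dim(SO(1,3)) = 6$ and $t_p \ge 0$; hence $r \le 10$ always, and the maximal value $r = 10$ forces simultaneously $\dim(H_p) = 6$ and $t_p = 0$. The residual isotropy $H_p$ is a subgroup of the isotropy group $H_0$ of the torsion tensor alone, since the admissible frame freedom can only shrink as $q$ increases, so $\dim(H_0) = 6$; because $SO(1,3)$ is connected, a $6$-dimensional Lie subgroup of it is the whole group, i.e.\ $H_0 = SO(1,3)$. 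By the meaning of $H_0$ this says precisely that the zeroth-order Cartan invariant --- the torsion tensor $T^a_{~bc}$ written in the canonical frame --- is left invariant by the full proper orthochronous Lorentz group.

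The heart of the proof is then to show that the only Lorentz-invariant torsion tensor is the zero tensor. I would invoke the irreducible decomposition of $T_{abc}$ established in Section~\ref{sec:TorsionDecomp} into its vector part $V^a$, axial part $A^a$ and purely tensorial part $t_{(ab)c}$: each of these sits in a (sum of) non-trivial irreducible representation(s) of $SO(1,3)$, so if $T_{abc}$ is fixed by all of $SO(1,3)$ then so is each irreducible component, and a non-trivial irreducible representation contains no non-zero invariant vector. Equivalently, and without appealing to that decomposition, every $SO(1,3)$-invariant tensor is a linear combination of tensor products and contractions of the invariant tensors $\eta_{ab}$ and $\epsilon_{abcd}$; since tensor products add ranks while contractions lower them by two, every such tensor has even total rank, whereas $T_{abc}$ has rank three. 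Either way $T^a_{~bc} \equiv 0$, and consequently all of its covariant derivatives $T_{abc|d_1\cdots d_q}$ vanish as well.

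It then remains only to recognize the resulting geometry: a teleparallel geometry with $Q_{abc} = 0$, $R^a_{~bcd} = 0$ and now $T^a_{~bc} = 0$ carries a connection that is metric compatible, flat and torsion free, hence equal to the Levi-Civita connection of $g_{\mu\nu} = \eta_{ab}h^a_{~\mu}h^b_{~\nu}$, which is therefore a flat metric; locally this is Minkowski space, and conversely Minkowski space realizes the full ten-parameter Poincar\'e group of affine frame symmetries. I expect the one genuinely substantive step to be the middle one --- ruling out a non-zero Lorentz-invariant torsion tensor --- but this is exactly the invariant-theoretic content packaged in the classification of Section~\ref{sec:TorsionDecomp}, so it should be cited rather than re-derived. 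The only other point I would state explicitly is the passage from $\dim(H_0) = 6$ to $H_0 = SO(1,3)$, equivalently that any non-zero torsion tensor is stabilized by at most a proper --- hence at most five-dimensional --- subgroup of $SO(1,3)$, so that $r \le 9 < 10$; this is what shows the maximally symmetric case is genuinely isolated.
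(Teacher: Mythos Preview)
Your proposal is correct and follows essentially the same logic as the paper: maximal symmetry forces $\dim H_p = 6$ and $t_p = 0$, hence the torsion tensor must be invariant under the full Lorentz group, which forces $T^a_{~bc}=0$ and yields Minkowski space. The paper's proof is terser and simply asserts that full $SO(1,3)$ isotropy of the torsion requires the axial, vector and purely tensorial parts to vanish (implicitly appealing to the classification of Section~\ref{sec:TorsionDecomp}); your rank-parity argument---that every $SO(1,3)$-invariant tensor built from $\eta_{ab}$ and $\epsilon_{abcd}$ has even rank, whereas the torsion has rank three---is a self-contained and more elementary alternative that avoids any appeal to that classification, and your explicit identification of the resulting flat, torsion-free, metric-compatible geometry as Minkowski space fills in a step the paper leaves implicit.
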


\begin{proof}
In order to obtain a ten dimensional isometry group for a teleparallel geometry, the discrete sequences must be 
\beq \{t_p\}= \{ 0, 0 \},~~\{ dim~H_p\} = \{ 6, 6 \} \nonumber \eeq
\noindent Then the group of affine frame symmetries is 
\beq r = s+D - t_p = 6+4-0 = 10 \nonumber \eeq
\noindent To attain the sequence for $\{ t_p\}$, the components of the torsion tensor must be constant, which is permissible. However, the second condition can only occur when the axial, vector and purely tensorial parts of torsion vanish. Hence, the space must have vanishing torsion and curvature.
\end{proof}

\noindent In subsection \ref{subsec:tensortorsion} we will show that the existence of a non-zero torsion tensor restricts the maximum dimension of the group of affine frame symmetries in {\it any} Riemann-Cartan geometry.

\newpage
\section{Classification of the torsion tensor and its covariant derivatives} \label{sec:TorsionDecomp}

Unlike the classification of the Ricci and Weyl tensor in metric-based theories, we cannot treat the torsion tensor as an operator to determine its eigenvalues and eigenbasis. In principle, the alignment classification can be applied to the torsion tensor \cite{classa, classb, classc}. However, the alignment classification is not fine enough to distinguish between important inequivalent torsion tensors; for example, it is insensitive to the linear isotropy groups of the torsion tensor. 

In order to classify the torsion tensor and determine the linear isotropy group of this tensor, we propose the following two part approach. We will first classify the linear isotropy group of the irreducible parts of the torsion tensor. In each case we will provide a canonical form of the irreducible parts of the torsion tensor and indicate the group of Lorentz transformations that can still be used \cite{hall2004symmetries}. Once this has been done, we will then apply the alignment classification to distinguish between subcases. As such the classification of the torsion tensor will be labelled by the isotropy group and then by its alignment type.

\subsection{Irreducible parts of the torsion tensor} 


The torsion two-form, $T^a$, can be expanded as
\begin{equation}
T^a=\frac{1}{2}T^a_{\phantom{a}bc} \,h^b \wedge h^c
\end{equation}
with $24$ independent components. Under the global Lorentz group, the torsion tensor can be decomposed into three irreducible parts \cite{Hehl_McCrea_Mielke_Neeman1995}: 
\beq T_{abc} = \frac23 (t_{abc} - t_{acb}) - \frac13 (g_{ab} V_c - g_{ac} V_b) + \epsilon_{abcd} A^d.\label{TorsionDecomp} \eeq

\noindent Here ${\bf V}$ denotes the vector part which is the trace of the torsion tensor: 
\beq V_a = T^b_{~ba}, \label{Vtor} \eeq
\noindent Lowering the index of the torsion tensor and applying the Hodge dual of the resulting tensor gives the axial part, ${\bf A}$: 
\beq A^a = \frac16 \epsilon^{abcd}T_{bcd}. \label{Ator} \eeq
\noindent Finally, we can construct the purely tensorial part, ${\bf t}$: 
\beq t_{(ab)c} = \frac12 (T_{abc}+ T_{bac}) -\frac16 (g_{ca} V_b + g_{cb} V_a) + \frac13 g_{ab} V_c. \label{Ttor} \eeq
\noindent We will call each of these tensors the {\it vector part , axial part , and tensor part } of the torsion tensor. The tensor part satisfies the following identities: 
\beq \begin{aligned}
& g^{ab} t_{(ab)c} = 0, ~t_{(ab)c} = t_{(ba)c},~t_{(ab)c} + t_{(bc)a} + t_{(ca)b} = 0. \end{aligned} \eeq

\noindent A simple counting argument shows that the tensor part of the torsion will have 16 algebraically independent components. Including the components of the vector part and axial part, this gives 24 components for the torsion tensor. 

Working with a complex null frame $ \{ \bh^a \} = \{ \bn, \bell, \bm, \bar{\bm} \}$ from the orthonormal frame $\{ \tilde{\bh}^a \} = \{ {\bf u}, {\bf x}, {\bf y}, {\bf z} \}$ through the transformation
\beq \begin{aligned} & \bn = \frac{1}{\sqrt{2}} ( {\bf u} + {\bf x}),~~ \bell = \frac{1}{\sqrt{2}} ( {\bf u} - {\bf x}),~\bar{\bm} = \frac{1}{\sqrt{2}} ( {\bf y} + i {\bf z}),~ \bm  = \frac{1}{\sqrt{2}} ( {\bf y} - i {\bf z}), \end{aligned} \label{orthoframe} \eeq
\noindent we can choose a basis of algebraically independent components: 
\beq \begin{aligned} & t_{(12)1},~ t_{(13)1},~ t_{(14)1},~ t_{(22)1},~ t_{(23)1},~ t_{(24)1},~ t_{(33)1},~ t_{(44)1},~ \\
& t_{(13)2},~ t_{(14)2},~ t_{(23)2},~ t_{(24)2},~ t_{(33)2},~ t_{(44)2},~ t_{(14)3},~ t_{(24)3},  \end{aligned}\eeq

\noindent while the remaining components have the following algebraic dependencies from the trace-free and cyclic identities, respectively:
\beq \begin{aligned} & t_{(34)1} = t_{(12)1},~ t_{(34)2} = t_{(12)2},~ t_{(34)3} = t_{(12)3},~  t_{(34)4} = t_{(12)4}, \\
\end{aligned} \eeq
\noindent and
\beq \begin{aligned} 
& t_{(11)2} = - t_{(12)1},~ t_{(11)3} = - t_{(13)1},~ t_{(11)4} = - t_{(14)1},~ t_{(12)2} = - t_{(22)1},\\
&  t_{(13)3} = - t_{(33)1},~ t_{(14)4} = -t_{(44)1},~ t_{(22)3} = -t_{(23)2},~ t_{(23)3} = -t_{(33)2},\\
&  t_{(22)4} = -t_{(24)2},~ t_{(24)4} = -t_{(44)2},~ t_{(33)4} = -t_{(12)3},~t_{(12)4} = -t_{(44)3},\\
&  t_{(12)3} = - t_{(23)1} -t_{(13)2},~ t_{(44)3} = t_{(24)1} +t_{(14)2},  \\
&  t_{(13)4} = -t_{(12)1} - t_{(14)3},~ t_{(23)4} = t_{(22)1} - t_{(24)3},~ \\ 
& t_{(11)1} = 0,~ t_{(22)2} = 0,~ t_{(33)3} = 0,~ t_{(44)4} = 0. \end{aligned}  \eeq

We emphasize that the above relationship between the components of the purely tensorial part of the torsion tensor has been presented in an arbitrary complex-null frame.

\subsection{Permitted linear isotropy groups of the vector and axial parts of torsion} \label{subsec:vectortorsion}

The axial part, ${\bf A}$, and vector part, ${\bf V}$, of the torsion tensor can readily be classified using their magnitude and direction. When these vector fields are non-vanishing, it is natural to adapt the frame to them. Even in the case that ${\bf A}={\bf V}$, we can adapt the frame depending on whether ${\bf V}$ is timelike or spacelike by choosing, respectively, \beq \tilde{\bh}^1 = {\bf V}/\sqrt{-|{\bf V}|^2}  \text{ if } |{\bf V}|^2 < 0 \text{ or } \tilde{\bh}^2 = {\bf V}/\sqrt{|{\bf V}|^2} \text{ if } |{\bf V}|^2 > 0,\eeq  
\noindent while if $| {\bf V}|^2 = 0$ we can choose ${\bf V}$ to be an element of a null frame basis. 

\begin{enumerate}

\item ${\bf A} = a {\bf V}$ , $a \in \mathbb{R}$

\begin{enumerate}

\item $|{\bf V}| < 0$: The frame can be adapted so that ${\bf V} = v \tilde{\bh}^1$, $v \in \mathbb{R}$. The linear isotropy group is $SO(3)$. 
\item  $|{\bf V}| = 0$: The frame can be adapted so that ${\bf V} = v(\tilde{\bh}^1 - \tilde{\bh}^2) = \sqrt{2} v \bell$, $v \in \mathbb{R}$. The linear isotropy group is $E(2)$, and boosts can be used to set $v=1$. 
\item $|{\bf V}| > 0$: The frame can be adapted so that ${\bf V} = v \tilde{\bh}^2$, $v \in \mathbb{R}$. The linear isotropy group is $SO(1,2)$. 

\end{enumerate}

\item ${\bf A} \neq a {\bf V}$ 

Denoting $\bA' \in span(\bV)^\perp$, 

\begin{enumerate}

\item $|{\bf V}| <0  \text{ and } |{\bf A}'| >0$: Adapt the frame so that
\beq \bV = v \tilde{\bh}^1,~\bA = a_1 \tilde{\bh}^1 + a_2 \tilde{\bh}^2. \eeq
\noindent The remaining frame freedom consists of rotations about 
$\tilde{\bh}^2$.

\item $|\bV| = 0 \text{ and } |\bA'| <0 $
\beq \bV = v (\tilde{\bh}^1-\tilde{\bh}^2),~\bA = a_1 (\tilde{\bh}^1-\tilde{\bh}^2)+ a_2 \tilde{\bh}^1. \eeq
\noindent The remaining frame freedom consists of 1D null rotations about $\bV$. 

\item $|\bV| = 0 \text{ and } |\bA'| =0 $
\beq \bV = v (\tilde{\bh}^1-\tilde{\bh}^2),~\bA = a_1 (\tilde{\bh}^1 - \tilde{\bh}^2) + a_2 (\tilde{\bh}^1+\tilde{\bh}^2). \eeq
\noindent The remaining freedom consists of rotations in the $\tilde{\bh}^3$ - $\tilde{\bh}^4$ plane.

\item $|\bV| = 0 \text{ and } |\bA'| >0 $
\beq \bV = v (\tilde{\bh}^1-\tilde{\bh}^2),~\bA = a_1 (\tilde{\bh}^1-\tilde{\bh}^2) + a_2 \tilde{\bh}^3 . \eeq
\noindent The remaining freedom consists of 1D null rotations about $\bV$.

\item  $|{\bf V}| >0  \text{ and } |{\bf A}'| <0$: Adapt the frame so that 
\beq \bV = v \tilde{\bh}^2,~\bA = a_1 \tilde{\bh}^2 + a_2 \tilde{\bh}^1. \eeq
The remaining frame freedom consists of rotations about 
$\tilde{\bh}^2$. 

\item  $|{\bf V}| >0  \text{ and } |{\bf A}'| =0$: Adapt the frame so that 
\beq \bV = v \tilde{\bh}^2,~\bA = a_1 \tilde{\bh}^2 + a_2( \tilde{\bh}^1 - \tilde{\bh}^3). \eeq
\noindent The remaining frame freedom consists of one-dimensional (1D) null rotations about $\bA'$.

\item  $|{\bf V}| >0  \text{ and } |{\bf A}'| >0$: Adapt the frame so that 
\beq \bV = v \tilde{\bh}^2,~\bA = a_1 \tilde{\bh}^2 + a_2 \tilde{\bh}^3. \eeq
\noindent The remaining frame freedom consists of boosts in the $\tilde{\bh}^1$ - $\tilde{\bh}^4$ plane. 

\end{enumerate}

\end{enumerate}

\subsection{Permitted linear isotropy groups of the purely tensorial part of torsion} \label{subsec:tensortorsion}

The remaining linear isotropy at zeroth order is determined by the tensor part of the torsion. Using the infinitesimal generators of the Lorentz frame transformation group and knowledge of all permitted subgroups \cite{hall2004symmetries, Milson:2007ft} it is possible to determine canonical forms for the tensor-part of the torsion. Due to the decomposition of the torsion tensor \eqref{TorsionDecomp}, we will consider the anti-symmetrized tensor: 
\beq \hat{T}_{abc} = \frac23 (t_{(ab)c}-t_{(ac)b}) \label{That} \eeq

\noindent and relate the linear isotropy group of $\hat{T}_{abc}$ to the non-zero components of $t_{(ab)c}$. To describe the canonical form of $t_{(ab)c}$, we list the non-zero components for each of the permitted one dimensional (1D), 2D  or 3D linear isotropy groups. The canonical forms listed here allow for a particular choice of the frame determined by the linear isotropy group of the torsion tensor. 

\begin{itemize}
\item {\bf 1D linear boost isotropy group}:

The canonical form of the torsion tensor invariant under transformations of the form,
\beq & \bell' = D^2 \bell,~~{\bf n'} = D^{-2} {\bf n}, {\bf m}'=  {\bf m},~~ D, \in \mathbb{R}, \label{1DBoost} & \eeq
\noindent has the following non-zero components:
\beq t_{(23)1}, t_{(24)1}, t_{(13)2}, t_{(14)2}.  \label{BoostIso} \eeq

\noindent Null rotations about $\bell$ and $\bn$ are excluded as they will change the form of the canonical form. Spatial rotations  can be used to  while spins are permitted to set one of the non-zero components to equal another. For example, a spin can be applied to set $t_{(14)2} = t_{(13)2}$. 

\item {\bf 1D linear spin isotropy group}: 

The canonical form of the torsion tensor invariant under transformations of the form,
\beq & \bell' = \bell,~~{\bf n'} =   {\bf n}, {\bf m}'= e^{i \theta} {\bf m},~~ \theta \in \mathbb{R}, \label{1Dspin} &\eeq

\noindent has the following non-zero components
\beq t_{(12)1}, t_{(14)3}, t_{(22)1},  t_{(24)3}, t_{(22)1}. \label{SpinIso} \eeq


\noindent In this case, null rotations about $\bell$ and $\bn$ are excluded, while boosts are permitted to set one of the non-zero components to some non-zero constant. For example, a boost can lead to $t_{(12)1} = 1$.

\item {\bf 1D linear null rotation isotropy group}: 

The canonical form of the torsion tensor invariant under null rotations about $\bell$,
\beq & \bell' = \bell,~~{\bf n}' = {\bf n} + B + {\bf m} + B \bar{{\bf m}} + B^2  \bell, {\bf m}' = {\bf m} + B \bell,~~ B \in \mathbb{R}, & \eeq

\noindent has non-zero components:
\beq t_{(22)1},~t_{(33)2} = t_{(22)1}+t_{(24)3},~ t_{(44)2} = 2 t_{(22)1} - t_{(24)3},~ t_{(24)3} ,~ t_{(23)2}. \label{1DNRIso}\eeq

\noindent As null rotations about $\bn$ and null rotations about $\bell$ with a purely complex parameter, $iB, B \in \mathbb{R}$, affect the form of the torsion tensor, they are excluded. Boost and spins can be used to set some components to constant or equal other components. For example,  $t_{(22)1} = 1$ and $t_{(14)2} = t_{(13)2}$.

\item {\bf 2D linear null rotation isotropy group}:

The canonical form of the torsion tensor invariant under 
\beq & \bell' = \bell,~~{\bf n}' = {\bf n} + B + {\bf m} + \bar{B} \bar{{\bf m}} + B \bar{B} \bell, {\bf m}' = {\bf m} + B \bell,~~ B \in \mathbb{C}, & \eeq

\noindent has non-zero components:
\beq t_{(23)2}, t_{(24)2}.  \label{2DNRIso} \eeq

\end{itemize}

\noindent Null rotations about $\bn$ affect the form of the torsion tensor and hence they are excluded. Boost and spins can be used to set some components to constant or equal another component. For example,  $t_{(23)2} = 1$ and $t_{(24)2} = 0$.

\subsection{Frame fixing and the alignment classification}

Unlike the vector fields, the tensor part of the torsion is not easily classified; however, it is possible to apply the alignment classification, using the effect on the tensor by a boost on a chosen null coframe $\{ \bn, \bell, \bm^i\}$, $\bell' = \lambda \bell,~~ {\bf n'} = \lambda^{-1} \bn$. For an arbitrary tensor, ${\bf \bar{T}}$, of rank $n$ the components transform under a boost as
\beq \bar{T}'_{a_1 a_2...a_n} = \lambda^{b_{a_1 a_2 ... a_n}} \bar{T}_{a_1 a_2 ... a_n},~~ b_{a_1 a_2...a_n} = \sum_{i=1}^n(\delta_{a_i 1} - \delta_{a_i 2}),  \eeq
\noindent where $\delta_{ab}$ denotes the Kronecker delta symbol. The quantity, $b_{a_1 a_2 ... a_n}$, is called the {\it boost weight} (b.w.) of the frame component $\bar{T}_{a_1 a_2 ... a_p}$. The b.w. of each algebraically independent component of $t_{(ab)c}$ and $t_{(ab)c|d}$, defined relative to an arbitrary frame, are displayed in figure 1 and figure 2, respectively. 

\begin{figure} \label{fig:bwd1torsion}
\beq \begin{array}{c|c} b.w. & Components \\ \hline
2 & t_{(13)1}, t_{(14)1} \\
1 & t_{(12)1}, t_{(33)1}, t_{(44)1} t_{(14)3} \\
0 & t_{(23)1}, t_{(24)1}, t_{(13)2}, t_{(14)2} \\
-1 & t_{(22)1}, t_{(33)2}, t_{(44)2}, t_{(24)3} \\
-2 & t_{(23)2}, t_{(24)2}  \end{array} \eeq
\caption{The b.w. values of the algebraically independent components of the purely tensorial part of torsion in an arbitrary frame.}
\end{figure} 

\begin{figure} \label{fig:bwtorsion}
\beq \begin{array}{c|c} b.w. & Components \\ \hline
3 & t_{(13)1|1}, t_{(14)1|1} \\
2 & t_{(13)1|3}, t_{(13)1|4}, t_{(14)1|3}, t_{(14)1|4},  t_{(12)1|1}, t_{(33)1|1}, t_{(44)1|1}, t_{(14)3|1} \\
1 & t_{(13)1|2}, t_{(14)1|2}, t_{(12)1|3}, t_{(12)1|4},    t_{(33)1|3}, t_{(33)1|4},  t_{(44)1|3}, t_{(44)1|4}, \\
 & t_{(14)3|3},  t_{(14)3|4},  t_{(23)1|1}, t_{(24)1|1},  t_{(13)2|1}, t_{(14)2|1} \\
0 & t_{(12)1|2}, t_{(33)1|2}, t_{(44)1|2}, t_{(14)3|2},  t_{(23)1|3}, t_{(23)1|4}, t_{(24)1|3}, t_{(24)1|4}, \\ 
&  t_{(13)2|3}, t_{(13)2|4}, t_{(14)2|3}, t_{(14)2|4},  t_{(22)1|1}, t_{(33)2|1}, t_{(44)2|1}, t_{(24)3|1} \\
-1 & t_{(23)1|2}, t_{(24)1|2}, t_{(13)2|2}, t_{(14)2|2},  t_{(22)1|3}, t_{(22)1|4}, t_{(33)2|3}, t_{(33)2|4},\\ 
&  t_{(44)2|3}, t_{(44)2|4}, t_{(24)3|3}, t_{(24)3|4}  t_{(23)2|1}, t_{(24)2|1} \\ 
-2 & t_{(22)1|2},  t_{(33)2|2}, t_{(44)2|2}, t_{(24)3|2},  t_{(23)2|3}, t_{(23)2|4}, t_{(24)2|3}, t_{(24)2|4}, \\
-3 & t_{(23)2|2}, t_{(24)2|2}   
\end{array} \eeq
\caption{The b.w. values of the algebraically independent components of $t_{(ab)c|d}$ in an arbitrary frame.}
\end{figure}

We define the {\it boost order}, $\mathcal{B}_{{\bf \bar{T}}}(\bell)$, as  the maximum b.w. of a tensor, ${\bf \bar{T}}$, relative to a particular null direction $\bell$. We will consider an arbitrary tensor, ${\bf \bar{T}}$, with boost order at most 2 for all choices of the null direction $\bell$, such as the torsion tensor and any rank two tensor. The tensor, ${\bf \bar{T}}$, can be broadly classified into five {\it alignment types}: {\bf G},{\bf I}, {\bf II}, {\bf III}, and {\bf N} if there exists an $\bell$ such that $\mathcal{B}_{{\bf \bar{T}}} (\bell) = 1, 0,-1,-2$ and we will say $\bell$ is ${\bf \bar{T}}$-aligned, while if {\bf T} vanishes, then it belongs to alignment type {\bf O}. We will define the {\it secondary alignment type} as the alignment classification of the remaining null direction $\bn$. 

For higher rank tensors, like the covariant derivatives of the torsion tensor, the alignment types are still applicable although it is possible that $|\mathcal{B}_{{\bf \bar{T}}} (\bell)|$ may be greater than two. In Riemannian geometry, any spacetime that cannot be characterized by the set $\mathcal{I}$, known as an $\mathcal{I}$-degenerate spacetime, admits a null coframe such that all of the positive b.w. terms of the curvature tensor and its covariant derivatives are zero in this common coframe; that is, they are all of alignment type {\bf II} \cite{4dCSI}.

\subsubsection{Algorithmic approach to the choice of the invariant frame}

To determine the invariant frame, we will first exploit the existence of the vector and axial parts of the torsion, $\bV$ and $\bA$, to potentially fix two of the frame basis elements. In some of the possible cases for $\bV$ and $\bA$ there is some freedom in the choice of this frame. In order to fix the frame in line with the alignment classification we will always choose the frame that minimizes the alignment type of the resulting form of the torsion tensor. 

Once the frame has been adapted to $\bV$ and $\bA$ as much as possible, the remaining frame freedom is used to put the purely tensorial part of the torsion tensor, $t_{(ab)c}$, into a canonical form. If $t_{(ab)c}$ admits the same linear isotropy as  $\bV$ and $\bA$, then the associated canonical form for $t_{(ab)c}$ can be realized in the same frame. Otherwise, we will use the canonical form arising from the alignment classification.

We remind the reader that there are other frames that can be used. One notable example is the proper frame in which the spin-connection is trivial, $\omega^a_{~bc} = 0$. This frame is not an invariant frame as defined above since it cannot generally be defined using tensorial conditions.  

\subsubsection{Alignment classification of the trace and vector parts of torsion}

By adapting the frame to the vector or axial parts of the torsion tensor, the alignment classification of their contribution to the torsion tensor will be determined: 
\beq \hat{T}_{abc} = \frac23 g_{a[b}V_{c]} + \epsilon_{abcd}A^d \eeq

\noindent We emphasize that this choice of frame will have no impact on the purely tensorial part, $t_{(ab)c}$, since it is an irreducible piece of the torsion tensor. 

We note that in some cases when one of the invariant vectors is spacelike, we have the choice of adapting the frame so that the resulting tensor, $\hat{T}_{abc}$, will be either of type {\bf I} and secondary type {\bf I} or type {\bf II}. While it appears to be advantageous to fix the frame to ensure the type {\bf II} property, we will include the other option for the sake of completeness. 

\begin{enumerate}

\item ${\bf A} = a {\bf V}$ , $a \in \mathbb{R}$

\begin{enumerate}

\item $|{\bf V}| < 0$ 

Type {\bf I} and secondary type {\bf I} since ${\bf V} \propto \bell - \bn$. 

\item $|{\bf V}| > 0$: 

\begin{itemize}
\item Type {\bf I} and secondary type {\bf I} by choosing ${\bf V} \in span(\bell, \bn)$
\item Type {\bf II} if ${\bf V} \notin span(\bell, \bn)$
\end{itemize}

\item  $|{\bf V}| = 0$: 

Type {\bf III} since $\bV \propto \bell$. 
\end{enumerate}

\item ${\bf A} \neq a {\bf V}$ 

In this case we are able to define an invariant two-plane by adapting ${\bf V}$ and $\bA' \in span(\bV)^\perp$, the orthogonal part of $\bA$. We will classify the contributions of the axial and vector parts of the torsion tensor separately:

\beq \hat{T}^{V}_{abc} = \frac23 g_{a[b}V_{c]}, \hat{T}^A_{abc} =  \epsilon_{abcd}A^d \label{AVtensors} \eeq

\begin{enumerate}

\item $|{\bf V}| <0  \text{ and } |{\bf A}'| >0$

$\hat{T}^{V}_{abc}$ is of type {\bf I} and secondary type {\bf I}.

\begin{itemize}
\item If $\bV \cdot \bA = 0$ either $\hat{T}^{A}_{abc}$ will be of type {\bf I} and secondary  type {\bf I} or type {\bf II}.
\item If $\bV \cdot \bA \neq 0$ either $\hat{T}^{A}_{abc}$ will be of type {\bf I} and secondary  type {\bf I}
\end{itemize}

\item $|\bV| = 0 \text{ and } |\bA'| <0 $

$\hat{T}^{V}_{abc}$ is of type {\bf III}. 

$\hat{T}^{A}_{abc}$ will be of type {\bf I} and secondary type {\bf I}.

\item $|\bV| = 0 \text{ and } |\bA'| =0 $

Choosing $\bV \propto \bell$ and $\bA \propto \bn$ this implies that the sum $\hat{T}_{abc} = \hat{T}^{V}_{abc} + \hat{T}^{A}_{abc}$ will be of type {\bf I} and secondary type {\bf I}.

\item $|\bV| = 0 \text{ and } |\bA'| >0 $

$\hat{T}^{V}_{abc}$ is of type {\bf III}. 

$\hat{T}^{A}_{abc}$ will be of type {\bf II}.

\item  $|{\bf V}| >0  \text{ and } |{\bf A}'| <0$

$\hat{T}^{V}_{abc}$ is either of type {\bf II} or of type {\bf I} and secondary type {\bf I}. 

$\hat{T}^{A}_{abc}$ will be of type {\bf I} and secondary type {\bf I}.

\item  $|{\bf V}| >0  \text{ and } |{\bf A}'| =0$

\begin{itemize}
\item If $\bV \cdot \bA = 0$ then $\hat{T}^{A}_{abc}$ is of type {\bf III}. In this case $\hat{T}^{V}_{abc}$ must be of type {\bf II}.  
\item If $\bV \cdot \bA \neq 0$ then both $\hat{T}^{V}_{abc}$ and  $\hat{T}^{A}_{abc}$ are of  type {\bf I} and secondary type {\bf I}.
\end{itemize}

\item  $|{\bf V}| >0  \text{ and } |{\bf A}'| >0$

$\hat{T}^{V}_{abc}$ is either of type {\bf II} or of type {\bf I} and secondary type {\bf I}. 

\begin{itemize}
\item If $\bV \cdot \bA = 0$ then $\hat{T}^{A}_{abc}$ is of type {\bf II}. 
\item If $\bV \cdot \bA \neq 0$ then $\hat{T}^{A}_{abc}$ is either of type {\bf II} or of type {\bf I} and secondary type {\bf I} depending on the choice of type for $\hat{T}^{V}_{abc}$.
\end{itemize}

\end{enumerate}

\end{enumerate}

\subsubsection{Alignment classification of the canonical forms for the purely tensorial part of torsion}

\begin{itemize}

\item {\bf Trivial linear isotropy}

In this case, the torsion tensor fully specifies an invariant frame. This eliminates alignment types {\bf D}, {\bf N} as tensors of these two types always admit non-trivial linear isotropy groups. The permitted alignment types are {\bf I}, {\bf II} and {\bf III}. 

\item {\bf 1D boost isotropy:}

The canonical form of the torsion tensor must be of type {\bf D}. 

\item {\bf 1D spin isotropy: }

The canonical form of the torsion tensor is of alignment type {\bf I} and secondary alignment type {\bf I}. 

\item {\bf 1D null rotation isotropy: }

The canonical form of the torsion tensor is of alignment type {\bf III}. 

\item {\bf 2D null rotation isotropy: }

The canonical form of the torsion tensor is of alignment type {\bf N}. 

\item {\bf 3D $SO(3)$ isotropy:}

In this case, the purely tensorial part of the torsion must vanish and $\bA = a \bV$ with $\bV = v \bh^1$. The form of the torsion tensor is then

\beq T_{abc} = -\frac13 (g_{ab} V_c - g_{ac} V_b) + \epsilon_{abcd} A^d \eeq

\noindent As $\sqrt{2} \bh^1 = v( \bell + \bn)$ this implies the torsion tensor must be of alignment type {\bf I} and secondary alignment type {\bf I}. 

\item {\bf 3D $SO(1,2)$ isotropy:}

The purely tensorial part of the torsion must vanish and $\bA = a \bV$ with $\bV = v \bh^2$. Since $\sqrt{2} \bh^1 = v( \bell - \bn)$ this implies the torsion tensor must be of alignment type {\bf I} and secondary alignment type {\bf I}. 

\item {\bf 3D $E(2)$ isotropy:}

In this case, the purely tensorial part of the torsion must vanish and $\bA = a \bV$ with $\bV = v \bell$. The canonical form of the torsion tensor is of alignment type {\bf III} and secondary alignment type {\bf II}.

\end{itemize}

\subsection{Classification of the covariant derivative of the torsion tensor}

We must classify the first covariant derivative of the torsion tensor in order to fully classify the field equations of a teleparallel gravity theory. To do so we will apply the alignment classification for each of the possible torsion tensors with non-trivial linear isotropy group.

\begin{itemize}

\item {\bf 1D boost isotropy:}

The possible alignment types are {\bf II} or {\bf D}.

\item {\bf 1D spin isotropy: }

In the case, the alignment type can be {\bf G} or {\bf I} and secondary alignment type {\bf G} or {\bf I}. 

\item {\bf 1D null rotation isotropy: }

The alignment types are {\bf II} or {\bf III}. 

\item {\bf 2D null rotation isotropy: }

The alignment types are {\bf III} or {\bf N}.

\item {\bf 3D $SO(3)$ isotropy:}

The alignment types are {\bf O} or alignment type  {\bf G} or {\bf I} and secondary alignment type {\bf G} or {\bf I}.

\item {\bf 3D $SO(1,2)$ isotropy:}

The alignment types are {\bf O} or alignment type  {\bf G} or {\bf I} and secondary alignment type {\bf G} or {\bf I}.

\item {\bf 3D $E(2)$ isotropy:}

The alignment types in this case are {\bf O}, {\bf II} or {\bf III} with secondary alignment type {\bf I} or {\bf II}.

\end{itemize}

\subsection{Degenerate Kundt frame}

A class of teleparallel geometries of particular interest are those in which there exists a frame, called a degenerate Kundt frame, relative to which all of the positive b.w. components of the torsion tensor and its covariant derivatives are simultaneously zero (i.e., they are all of algebraic type {\bf II} in the same frame). If such a degenerate Kundt frame exists, then ${\bf V}$ and ${\bf A}$ can be invariantly fixed according to cases 1.b, 1.c, 2.a, 2.b, 2.d or 2.e in subsection \ref{subsec:vectortorsion}. This may not be an invariant frame as the torsion tensor can still admit a non-trivial linear isotropy group. 

This choice of frame leads to constraints on the components of the torsion tensor as the contributions to the torsion tensor from the axial and vector torsion tensors in \eqref{AVtensors} will be of alignment type {\bf II} at most. If a frame is chosen in this way, the permitted group of Lorentz frame transformations will be reduced as well, and this can affect the choice of canonical form for the purely tensorial part of the torsion tensor.

In this frame, all of the positive b.w. terms for the irreducible torsion tensor and its first covariant derivative as given in figures 1 and 2 must be zero. This will lead to a number of constraints on the frame and the spin-connection as the frame may not be proper. We will return to this in future work.

\subsection{Maximal dimension of the affine frame symmetry group}

As a corollary of the classification of the linear isotropy groups of the purely torsion part of the tensor, we have a related result that is applicable for {\it any} Riemann-Cartan geometry with a non-zero torsion tensor \cite{fon1992} 

\begin{cor} \label{cor:maxsym}
If a Riemann-Cartan geometry admits a non-zero torsion tensor, then the maximum dimension of the group of affine symmetries is at most seven. 
\end{cor}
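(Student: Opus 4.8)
The plan is to bound $r$ directly through the formula \eqref{rnumber}, $r = s + 4 - t_p$ with $s = \dim(H_p)$, by controlling how large the residual linear isotropy group can be when torsion is present. The same counting applies verbatim to a general Riemann--Cartan geometry once one runs the Cartan--Karlhede construction with both $T_{abc}$ and $R_{abcd}$ (and their covariant derivatives) included, isolating the relevant invariants via the structure equations \eqref{CartanStructure}. The key observation is that the isotropy groups form a non-increasing chain $H_0 \supseteq H_1 \supseteq \cdots \supseteq H_p$, since $\mathcal{T}^0 \subseteq \mathcal{T}^q$ for every $q$; moreover in the Riemann--Cartan case $H_0$ is contained in the linear isotropy group of the torsion tensor alone, being its intersection with the isotropy group of $R_{abcd}$. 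Hence $s \le \dim(H_0) \le (\text{isotropy dimension of } T_{abc})$, and it suffices to prove that a \emph{non-zero} torsion tensor has linear isotropy group of dimension at most three; since $t_p \ge 0$ this gives $r = s + 4 - t_p \le 3 + 4 - 0 = 7$.

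First I would recall that the connected subgroups of $SO(1,3)$ of dimension greater than three are only the four-dimensional stabilizer of a null direction and the six-dimensional group itself, so the claim reduces to excluding these two cases for $T_{abc} \neq 0$. The six-dimensional case is precisely the maximal-isotropy situation already disposed of in Theorem \ref{thm:MaxSymFS}: full $SO(1,3)$ isotropy forces the vector, axial and purely tensorial parts of the torsion all to vanish. For the four-dimensional case I would appeal to the classification of Section \ref{sec:TorsionDecomp}. If either irreducible vector $\bV$ or $\bA$ is non-zero, one adapts the frame to it, so that $H_0$ is contained in the stabilizer of a fixed timelike, spacelike or null vector, namely $SO(3)$, $SO(1,2)$ or $E(2)$ respectively --- each three-dimensional, cf. subsection \ref{subsec:vectortorsion}. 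If instead $\bV = \bA = 0$, the torsion is purely tensorial, and the enumeration of permitted isotropy groups of $t_{(ab)c}$ in subsection \ref{subsec:tensortorsion} reaches at most the two-dimensional null-rotation group. In every case with $T_{abc} \neq 0$ we get $\dim(H_0) \le 3$, hence $s \le 3$ and $r \le 7$.

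The main obstacle is really the completeness of the case analysis in Section \ref{sec:TorsionDecomp}: one must be sure that no non-zero rank-three tensor with the torsion symmetries is invariant under the full four-dimensional stabilizer $\mathrm{Sim}(2)$ of a null direction (one boost, two null rotations, one rotation). The delicate point is that adapting the frame to $\bV$ or $\bA$ pins down a null (or timelike/spacelike) \emph{vector}, not merely a null direction, which is exactly what removes the boost and cuts the dimension from four down to three; the purely tensorial sub-case is then the one requiring the detailed boost-weight bookkeeping of subsection \ref{subsec:tensortorsion}. Once that is in hand the inequality $r \le 7$ is immediate. It is also worth noting that the bound is sharp: a geometry with $SO(3)$ (or $SO(1,2)$) isotropy whose surviving torsion invariants are all constant has $t_p = 0$ and therefore $r = 7$.
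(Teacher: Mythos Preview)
Your argument is correct and follows the same route as the paper: bound $\dim H_p$ by the linear isotropy of the torsion tensor via the classification of Section~\ref{sec:TorsionDecomp}, then apply \eqref{rnumber}. The paper's own proof is a two-line version of yours; you add the explicit exclusion of the four-dimensional $\mathrm{Sim}(2)$ stabilizer and the sharpness remark, but the strategy and the key input (the isotropy classification of $T_{abc}$) are identical.
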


\begin{proof}
Regardless of the structure of the curvature tensor, if the torsion tensor is non-zero, the linear isotropy group of the torsion tensor is at most three. Following the more general equivalence algorithm in \cite{fon1992} this implies that the largest isometry group, $$ r = 4-t_p + dim~H_p$$ 
will occur when $t_p = 0$ and $dim~H_p = 3$. 
\end{proof}

\subsection{Highest iteration of the modified Cartan-Karlhede algorithm}

As the dimension of the continuous isotropy group permitted for a teleparallel geometry is at most 3D, the maximum dimension of the reduced frame bundle in this procedure is q=7.
There is one case where a teleparallel geometry has the potential to attain the highest iteration in the algorithm. At zeroth order the torsion tensor must satisfy the following conditions:

\begin{enumerate}
\item The purely tensorial part of the torsion tensor vanishes.
\item The axial and vector parts of the torsion tensor are proportional to each other. 
\item The proportionality factor is constant, $\bV = c \bA$, $c \in \mathbb{R}$. 
\end{enumerate}

At each iteration, $q>0$, we must then have either $t_q$ or $dim~H_q$ iterate by at most one while the other discrete quantity is unchanged. At the end of the algorithm, $t_7 = 4$ and $dim~H_7=0$. The corresponding group of affine frame symmetries is trivial for any teleparallel geometry for which $q=7$.


\newpage
\section{Examples of Teleparallel Geometries} \label{sec: Examples}

To illustrate the modified Cartan-Karlhede algorithm, we will consider several teleparallel geometries. The examples presented arise from the choice of a particular frame associated with a given spacetime metric and the choice of a spin-connection.  In these examples, the spin-connection $\omega^a_{~bc}$ is taken to be anti-symmetric in the first two indices and satisfies the constraints given by \eqref{zero-curvature} and \eqref{temp2}. We note that in examples A-D there is a non-trivial linear isotropy group at the termination of the algorithm, and this is inherited by the spin-connection. 

\subsection{Friedmann-Lemaitre-Robertson-Walker (FLRW) spacetime}

The coframe for the orthonormal diagonal frame gauge are \cite{Hohmann:2018rwf}:
\begin{equation}
\bh^a = \left[\begin{array}{c}
dt \\
R(t)\frac{1}{\xi}\,dr \\
R(t)r\,d\theta \\
R(t)r\sin(\theta)\,d\phi
\end{array} \right]
\end{equation}

\noindent where $\xi(r) = \sqrt{1-kr^2}$. We note that the connection one-forms are consistent with the anti-symmetric part of the $f(T)$ field equations are 
\begin{equation}
\bomega^a_{\ b} = \left[\begin{array}{cccc}
0 & 0 & 0 & 0 \\
0 & 0 & -\xi\,d\theta - \sqrt{k}r\sin(\theta)\,d\phi & \sqrt{k}r\,d\theta - \xi\sin(\theta)\,d\phi \\
0 & \xi\,d\theta +\sqrt{k}r\sin(\theta)\,d\phi & 0 & -\frac{\sqrt{k}}{\xi}\,dr-\cos(\theta)\,d\phi \\
0 & -\sqrt{k}r\,d\theta + \xi\sin(\theta)\,d\phi & \frac{\sqrt{k}}{\xi}\,dr+\cos(\theta)\,d\phi & 0
\end{array}\right].
\end{equation}

\noindent Computing the torsion tensor, we find that
\begin{equation}
\bT^a = \left[\begin{array}{c}
0 \\
\frac{\dot{R}(t)}{R(t)}\,h^1 \wedge h^2 + 2\frac{\sqrt{k}}{R(t)} \, h^3 \wedge h^4 \\
\frac{\dot{R}(t)}{R(t)}\,h^1 \wedge h^3 + 2\frac{\sqrt{k}}{R(t)} \, h^4 \wedge h^2 \\
\frac{\dot{R}(t)}{R(t)}\,h^1 \wedge h^4 + 2\frac{\sqrt{k}}{R(t)} \, h^2 \wedge h^3 \\
\end{array} \right].
\end{equation}

\noindent The vector and axial part of the torsion are, respectively,
\beq {\bf V}=  \frac{3\dot{R}}{R} h^1 \text{ and } {\bf A} = \frac{2 \sqrt{k}}{R} h^1. \eeq
\noindent while the purely tensorial part vanishes. As the frame has been adapted to $\bV$ alone, this is not an invariant frame yet; there may be additional tensors at higher order that will allow us to further fix the frame.

\noindent All three irreducible parts of the torsion are invariant under $SO(3)$, thus $dim~H_0 =3$ and there is only one functionally independent invariant, the magnitude of the axial part:
\beq \frac{2 \sqrt{k}}{R}. \eeq

\noindent We note that if $k = -1$ this leads to complex valued components of the torsion tensor. Thus for $k=-1$ the chosen connection is not ideal for a real-valued teleparallel geometry. It is possible to instead consider the connections in \cite{Hohmann:2015pva,Hohmann:2018rwf} to give real-valued FLRW geometries. In the case of $k \geq 0$ we can continue and note that the first covariant derivative retains the $SO(3)$ isotropy and does not include any new components. 

\subsection{The static spherically symmetric spacetime}

From \cite{Krssak_Saridakis2015} the static spherically symmetric line-element is 
\beq ds^2 = -A(r)^2 dt^2 + B(r)^2 dr^2 +r^2 d\theta^2 + r^2 \sin^2 \theta d\phi^2. \eeq
\noindent The corresponding coframe is  
\beq {\bf \tilde{h}}^a_{~\mu} = diag( A(r), B(r), r, r\sin\theta), \eeq

\noindent with the related half-null coframe:
\beq {\bh}^{a} = \left[ \begin{array}{c} \bn \\ \bell \\ {\bf X} \\ {\bf Y} \end{array} \right] = \left[ \begin{array}{c}  \frac{1}{\sqrt{2}} (A dt + B dr) \\ 
 \frac{1}{\sqrt{2}} (A dt - B dr) \\  rd\theta \\ r \sin\theta d\phi \end{array} \right].  \label{SSSnull} \eeq

\noindent Such that \beq ds^2 = -  \bell {\bf n} - {\bf n} \bell + {\bf X}^2 + {\bf Y}^2. \eeq

\noindent Choosing the inertial spin-connection, which is also consistent with the anti-symmetric part of the $f(T)$ field equations relative to the half-null frame:

\begin{equation}
\bomega^{b}_{~a} = \left[\begin{array}{cccc}
0 & 0 & \frac{1}{\sqrt{2}} d \theta & \frac{\sin \theta}{\sqrt{2}} d \phi  \\ 
0 & 0 & -\frac{1}{\sqrt{2}} d \theta & -\frac{\sin \theta}{\sqrt{2}} d \phi \\ 
- \frac{1}{\sqrt{2}} d \theta & \frac{1}{\sqrt{2}} d \theta & 0 & -\cos \theta d \phi  \\
-\frac{\sin \theta}{\sqrt{2}} d \phi & \frac{\sin \theta}{\sqrt{2}} d \phi & \cos \theta d \phi  & 0 \end{array} \right], \label{SSomega}
\end{equation}

\noindent then the torsion tensor has the following non-zero components
\beq T^a = \left[ \begin{array}{cccc} \frac{A'}{\sqrt{2}A B} {\bh}^1 \wedge {\bh}^2 \\
\frac{A'}{\sqrt{2}A B} {\bf h }^1 \wedge {\bh}^2 \\
\frac{B+1}{\sqrt{2} B r} ( {\bh}^1 \wedge {\bh}^3 - {\bh}^2 \wedge {\bh}^3) \\ 
\frac{B+1}{\sqrt{2} B r} ( {\bh}^1 \wedge {\bh}^4 - {\bh}^2 \wedge {\bh}^4)  \end{array} \right]. \eeq


\noindent The axial part vanishes, while the vector part of the torsion is: 
\beq {\bf V} = \frac{(A_{,r}r+2BA +2A)}{\sqrt{2}BAr} (\bell - \bn). \eeq

Looking at the form of the purely tensorial part, this tensor satisfies the conditions for spin isotropy, i.e. a $SO(1)$ rotation since
\beq t_{(11),2} = t_{(22),1} = -t_{(12),1}  = t_{(12),2} = - t_{(i3),3} = - t_{(i4),4} = t_{(33),i} = t_{(44),i},~~i \in[1,2]. \eeq

\noindent At zeroth order, there is one functionally independent component and the isotropy group consists of rotations in the $\bh^3$-$\bh^4 $ plane. The algorithm terminates at first order, as the isotropy group is unchanged and no new functionally independent invariants are produced. Thus $dim~H_1 = s =1$ and $t_1 = 1$ implying that $r = 1+4-1 = 4$.

\subsection{ Standard (Anti)-de Sitter spacetime} 

We will now consider the standard frame basis for the (anti)-de Sitter solution. In GR, the (anti)-de Sitter solution  can be described as special cases of the static spherically symmetric metric where $\Lambda$ is positive (negative) and the metric functions $A(r)$ and $B(r)$ are: $$A = B^{-1} = \sqrt{1-\frac{\Lambda r^2}{3}}.$$ Choosing the rest frame where the functions $A$ and $B$ are ``turned off'' (corresponding to the limit $\Lambda \to 0$), we find the expected spin-coefficients given in equation \eqref{SSomega}.  

The torsion tensor has the following non-zero components:
\beq \bT^a = \left[ \begin{array}{cccc} -\frac{\Lambda r}{\sqrt{2(9-3\Lambda r^2)}} {\bh}^1 \wedge {\bh}^2 \\
-\frac{\Lambda r}{\sqrt{2(9-3\Lambda r^2)}} {\bh}^1 \wedge {\bh}^2 \\
\frac{(3+ \sqrt{9-3\Lambda r^2})}{3r \sqrt{2}} ( {\bh}^1 \wedge {\bh}^3 - {\bh}^2 \wedge {\bh}^3) \\ 
\frac{(3+ \sqrt{9-3\Lambda r^2})}{3r \sqrt{2}}( {\bh}^1 \wedge {\bh}^4 - {\bh}^2 \wedge {\bh}^4)  \end{array} \right]. \eeq

\noindent As in the static spherically symmetric case, only the vector part of the torsion tensor is non-zero,
\beq {\bf V} =  \frac{ 6- 3\Lambda r^2 + 2 \sqrt{(9-3\Lambda r^2)}}{\sqrt{2r^2(9-3\Lambda r^2)}} (\bell - \bn). \eeq

\noindent The torsion scalar is non-constant, 
\beq T = \frac23 \left[ \frac{3\Lambda^2 r^4 - 2 (2 \Lambda r^2 -3 ) \sqrt{9-3\Lambda r^2} -15 \Lambda r^2 +18}{r^2 (\Lambda r^2 -3)}\right]. \eeq

At zeroth order, there is one functionally independent component and the isotropy group consists of rotations in the $\bh^3$-$\bh^4$ plane. The algorithm terminates at first order since $dim~H_0 = dim~H_1 = 1$ and $t_0 = t_1 = 1$. Thus the dimension of the isotropy group is still $r = 1+4-1 = 4$.

The fact that the number of functionally independent invariants is a significant difference from its analogue in GR. In the next example we will show the de Sitter metric admits other frames that will have a differing torsion structure and hence will attain a different symmetry group. We note that theorem \ref{thm:MaxSymFS} implies that no teleparallel geometry arising from the (anti-) de Sitter metrics will admit maximal symmetry.

\subsection{Two alternative tetrads for ``de Sitter'' spacetime}

Consider the line element for the de Sitter spacetime in coordinates $(t,x,y,z)$: 
\beq ds^2 = -dt^2 + R_0 e^{2H_0 t}(dx^2 + dy^2 + dz^2). \label{deSitterRob} \eeq

\noindent We will consider two potential frames: 
\beq \bh^a = \left[ \begin{array}{c} dt \\ R_0 e^{H_0 t} dx \\ R_0 e^{H_0 t} dy \\ R_0 e^{H_0 t} dz  \end{array} \right] \text{ and } \bh^{'a} = \left[ \begin{array}{c} \cosh(\alpha) dt + \sinh(\alpha)  R_0 e^{H_0 t} dx \\ \sinh(\alpha) dt + \cosh(\alpha)  R_0 e^{H_0 t} dx \\ R_0 e^{H_0 t} dy \\ R_0 e^{H_0 t} dz  \end{array} \right], \label{Robframe1} \eeq

\noindent where the boost angle is $\alpha = \alpha_0 R_0 H_0 e^{H_0 t } x$. We will assume the trivial spin-connection, $\bomega^a_{~b} = {\bf 0}$, for both frames. The second frame, $\bh^{'a}$, has been constructed by applying a Lorentz transformation to the frame, $\bh^{a}$, in \eqref{Robframe1} without applying the corresponding transformation to the spin-connection. Interestingly, both frames with the trivial connection give the same field equations. One frame is diagonal while the other is not and there is no Lorentz transformation mapping one frame into the other which also preserves the trivial connection condition. We will show that the two teleparallel geometries are not related by a diffeomorphism.

For the first tetrad, $\bh^{a}$, in \eqref{Robframe1} with $\bomega^a_{~b} = {\bf 0}$, the torsion tensor is of the form: 
\beq \bT^a = \left[ \begin{array}{c} 0 \\ H_0 \bh^1 \wedge \bh^2 \\ H_0 \bh^1 \wedge \bh^3 \\ H_0 \bh^1 \wedge \bh^4 \end{array} \right]. \eeq 

The axial and purely tensorial part of the torsion are zero, while the vector part is 
\beq {\bf V} = 3H_0 \bh^1. \eeq

\noindent The linear isotropy group must be $SO(3)$ in order to preserve the timelike direction $\bh^1$ and there are no functionally independent components, so $dim~H_0 = 3$ and $t_0 = 0$ At first order the covariant derivative of the torsion tensor is zero, thus $dim~H_1 = 3$ and $ t_1 = 0$. We conclude that the dimension of the symmetry group for this  teleparallel geometry is $r=3+4-0 = 7$.

To compute the torsion tensor for the second tetrad, $\bh^{'a}$, in \eqref{Robframe1} with $\bomega^a_{~b} = {\bf 0}$ we will translate the $t$ coordinate to set $\alpha_0 = 1$ and scale the coordinates so that the metric is conformal to a metric with $H_0 = 1$ with conformal factor $1/H_0$. Thus the boost angle is now $\alpha = R_0 x e^{t_0}$. The torsion tensor is 

\beq \bT^{'a} = \left[ \begin{array}{c} -\cosh(\alpha) \alpha \bh^{'1} \wedge \bh^{'2} \\ -\sinh(\alpha) \alpha \bh^{'1} \wedge \bh^{'2} \\ \cosh(\alpha) \bh^{'1} \wedge \bh^{'3} -\sinh(\alpha) \bh^{'2} \wedge \bh^{'3} \\ \cosh(\alpha) \bh^{'1} \wedge \bh^{'4} -\sinh(\alpha)  \bh^{'2} \wedge \bh^{'4} \end{array} \right]. \eeq 
  
\noindent The axial part of the tensor vanishes, while the vector part takes the form 
\beq {\bf V'} = [2 \cosh(\alpha) + \sinh(\alpha) \alpha ] \bh^{'1} + [2\sinh(\alpha) + \cosh(\alpha) \alpha] \bh^{'2}, \eeq

\noindent and the norm of ${\bf V'}$ is 
\beq |{\bf V'}|^2 = \alpha^2 -4. \eeq

\noindent We must consider three regions, where ${\bf V'}$ is timelike, null or spacelike. In each case we can adapt ${\bf V'}$ to be part of the frame basis. This restricts the isotropy group at zeroth order to be 3D at most. 

We will not compute the purely tensorial part of the torsion. However, ${t'}_{(ab)c}$ admits a 1D isotropy group consisting of spatial rotations about $\bh^{'2}$. Since ${\bf V'}$ lies in the span of the $\bh^{'1}$-$\bh^{'2}$ plane this property will be preserved, and so $dim~ H_0 = 1$. There is one functionally independent invariant at zeroth order, $t_0 = 1$. 

At first order, the isotropy group remains 1D, and no new functionally independent invariant appears. Thus, $dim~H_1 = 1$ and $t_1 = 1$, implying that the algorithm terminates. We conclude that the dimension of the symmetry group of this teleprallel geometry is $n = 1+4-1 = 4$ and the coframe, $\bh^a$, cannot be diffeomorphic to the coframe, $\bh^{'a}$.

\subsection{The PPGW metric}

We next consider an analogue teleparallel geometry arising from the subset of the vacuum PP-wave spacetimes admitting null rotations about $\bell$, known as the plane gravitational waves metrics in GR \cite{kramer}.  We consider the line element for the vacuum PPGW-wave metrics,
\beq ds^2 = - 2 du (dv + H(u,x,y) du) + 2 d\zeta d \bar{\zeta}, \label{ppwave} \eeq
\noindent where $\zeta = x + i y$ is a complex coordinate and $H$ satisfies $\Delta H = H_{,\zeta \bar{\zeta}} =0$. Thus for any complex-valued analytic function $f(\zeta, u)$: 
\beq H = f(u,\zeta) + \bar{f}(u,\bar{\zeta}). \nonumber \eeq
\noindent While for the sub-class of {\it plane gravitational waves} in GR, the complex-valued analytic function takes the form, 
\beq f(u, \zeta) = (A_0(u)+i A_1(u)) \zeta^2. \nonumber \eeq
\noindent To simplify matters, we will consider the transverse 2-space relative to real-valued coordinates, 
\beq ds^2 = - 2du (dv + H du) + dx^2 + dy^2, \nonumber \eeq
\noindent where $H$ takes the form
\beq H = 2A_0(u)( x^2 - y^2) - 4 A_1(u) xy. \nonumber \eeq

We will start with the complex null gauge: 
\beq \bh^a = \left[ \begin{array}{c} \bn \\ \bell \\ \bar{\bm} \\\ \bm \end{array} \right]= \left[ \begin{array}{c} dv + H du  \\ 
 dv  \\  \frac{1}{\sqrt{2}}(dx + i dy) \\  \frac{1}{\sqrt{2}}(dx - i dy) \end{array} \right] \eeq

\noindent and we will assume that the spin-connection is zero, $\omega^a_{~bc} = 0$. This corresponds to the inertial spin-connection for the null coframe which gives a consistent solution to the $f(T)$ field equations. This is consequently a proper frame; however, it is not an invariant frame as the frame has not been fully fixed using coordinate independent conditions from the torsion tensor and its covariant derivatives.

At zeroth order, we may apply a boost and spin defined in equations \eqref{1DBoost} and \eqref{1Dspin}, respectively, with the parameter defined by: \beq D^4 e^{2 i \theta} = m^a H_{,a} \label{PPGWBst}. \eeq
\noindent This will normalize the components of the torsion tensor so that
\beq \bT^a = \left [ \begin{array}{c} \bh^2 \wedge \bh^3 - \bh^2 \wedge \bh^4 \\ 0 \\ 0 \\ 0 \end{array} \right]. \eeq
\noindent However, in this resulting null frame, the spin-connection $\omega^a_{~bc}$ will no longer be trivial. The number of functionally independent invariants is zero, and the isotropy group consists of null rotations about $\bh^2 = \bell$. We note that the torsion trace vector $\bV$ and the torsion axial vector $\bA$ both vanish.

At first order by computing the covariant derivative of the torsion tensor, we may fix the two null rotation parameters in \eqref{2DNRIso}:
\beq \ell' = \bell,~~n' = \bn + B \bar{\bm} + \bar{B} \bm + |B|^2 \bell,~~\bm' = \bm + B \bell \nonumber \eeq
with the complex parameter,
\beq \bar{B} = \frac{ -n^b ln( m^a H_{,a})_{,b} (x+iy)}{2D e^{I\theta} }, \label{PPGWNR}\eeq

\noindent so that the only non-zero components relative to the null frame $\{ \bell, \bn, \bm, \bar{\bm} \} $ are:

\beq \begin{aligned} & T^1_{~23|3} =   \frac{2(A_0 - i A_1) (x- iy)^2}{\sqrt{2(A_0^2+A_1^2) (x^2 + y^2)^\frac32}}, \\ 
\end{aligned}  \eeq
\noindent and its complex conjugate $T^1_{~24|4}$. 

The frame is now an invariant frame as all of the parameters have been fixed: boost and spins by equation \eqref{PPGWBst}, null rotations about $\bell$ by \eqref{PPGWNR} and the null rotations about $\bn$ as the identity transformation. The spin-connection can be computed using the Lorentz transformation:

\beq \Lambda = \left[ \begin{array}{cccc} D & D^{-1} |B|^2 & B e^{i\theta} & \bar{B} e^{-i\theta} \\ 
0 & D^{-1} & 0 & 0 \\ 
0 & \bar{B} D^{-1} & e^{i\theta} & 0 \\
0 & B D^{-1} & 0 & e^{-\theta} \end{array} \right]. \label{PPGWtransf} \eeq

\noindent where $D, \theta$ and $B$ are respectively, the boost, spin and null rotation parameters given by equations \eqref{PPGWBst} and \eqref{PPGWNR}. The associated spin-connection has the following non-zero components:

\beq \begin{aligned} & \omega^1_{~12} = D^{-1} \bn' (D),~ \omega^1_{~13} = D^{-1} \bm' (D),~ \omega^1_{~14} = D^{-1} \bar{\bm}' (D), \\ 
&\omega^3_{~32} = -i \bn'(\theta),~\omega^3_{~33} = -i \bm'(\theta),\omega^3_{~34} = -i \bar{\bm}'(\theta),\\
&\omega^1_{~32} = -[\bn'(B) + D^{-1} B \bn'(D) - i B \bn'(\theta)],\\ 
& \omega^1_{~33} = -[\bm'(B) + D^{-1} B \bm'(D) - i B \bm'(\theta)], \\
& \omega^1_{~34} = -[\bar{\bm}'(B) + D^{-1} B \bar{\bm}'(D) - i B \bar{\bm}'(\theta)]. \end{aligned} \eeq

\noindent The primed indices denote the elements of the new Lorentz frame, ${\bh'}^a = \{ \bell', \bn', \bm', \bar{\bm}'\}$ and ${\bh'}^a = \Lambda^a_{~b} \bh^b$. This spin-connection will satisfy the constraints for a $f(T)$ teleparallel gravity solution; that is, it is anti-symmetric in the first two indices, has vanishing curvature and satisfies the anti-symmetric part of the $f(T)$ field equations. 

We note that the set of Cartan invariants so far, $\mathcal{T}^1$, consists of the components of the torsion tensor and its covariant derivatives. The first order Cartan invariants will be expressed in terms of the coordinates $x$ and $y$, the functions $A_0(u)$ and $A_1(u)$, and their frame derivatives. Thus, at first order there are 3 functionally independent real-valued scalars that can be constructed from the components of the covariant derivative of the torsion tensor (i.e.,  $t_1 = 3$) and the isotropy group at first order is zero (i.e., $dim~H_1 = 0$). 

At second order, there are no new functionally independent invariants and the second order isotropy group remains zero dimensional: therefore the algorithm terminates with $t_1 = t_2 =3$ and $dim~H_1 = dim~H_2=s=0$. The isometry group has dimension $r= 0+4-3 =1$. 

As the non-zero components of the torsion tensor and its covariant derivatives all have negative boost weight, according to \cite{Higher} this implies that all SPIs formed from these tensors must vanish and hence the PPGW teleparallel geometry has the $\text{VSI}_T$ property \cite{Higher}. This is similar to the analogous plane gravitational wave $\text{VSI}$ spacetimes in GR. However, there is one significant difference between the metric-based geometry and the teleparallel geometry; the torsion tensor and its covariant derivatives do not admit null rotation linear isotropy, and hence there will be no null rotation affine frame symmetry in this teleparallel geometry. Since a general torsion tensor can admit null rotations as an element of their linear isotropy, it is conceivable that there is a connection that will satisfy the anti-symmetric part of the $f(T)$ field equations for a plane wave geometry and preserve null rotation linear isotropy.

There are a number of interesting questions that arise from this example. Are there other teleparallel geometries outside of Minkowski space and the PPGW solution that are $\text{VSI}$ spacetimes? Is the set of Cartan invariants equivalent to the set of SPIs, $\mathcal{I}_T$, for all teleparallel geometries? Is $\text{VSI}_T$ equivalent to $\text{VSI}_{R}$? These questions will be considered in a forthcoming paper.
\newpage

\section{Discussion} \label{sec:discussion}

We have considered the equivalence problem for teleparallel geometries. These are manifolds equipped with a frame basis and a connection for which both of the curvature and non-metricity are zero; for such spaces, the torsion tensor and its covariant derivatives are the fundamental objects of study. The teleparallel geometries play an important role in alternative theories of gravitation, which are known as teleparallel theories of gravity where torsion replaces curvature as the invariant quantity describing gravity. 

To determine the equivalence of two teleparallel geometries, we have proposed a modification of the Cartan-Karlhede algorithm adapted to Riemann-Cartan geometries \cite{fon1992}. The algorithm relies on restricting the parameters of the Lorentz frame transformation group by fixing the torsion tensor and all covariant derivatives of the torsion tensor up to its $q$-th covariant derivative into canonical forms which is determined by the linear isotropy group of each tensor. At each iteration the number of functionally independent invariants and the remaining linear isotropy of the $q$-th covariant derivative is recorded. 
This process yields two invariant discrete sequences and a list of Cartan invariants that locally uniquely characterize the space. 

By identifying all of the permitted linear isotropy group of the torsion tensor and listing the canonical forms for the torsion tensor for each of the potential linear isotropy groups, we have restricted the linear isotropy group at higher iterations to be less than three. This result also imposes an upper bound on the number of iterations needed to classify any teleparallel geometry to be less than $q=7$.

In addition to classification, the proposed algorithm gives insight into the symmetry group of a teleparallel geometry. This is important since the metric is no longer the central object of study, and hence the set of isometries may not represent a set of intrinsic symmetries for a given teleparallel geometry. Introducing the more general concept of an affine frame symmetry as a diffeomorphism from the manifold to itself which leaves the frame basis determined by the Cartan-Karlhde algorithm and the associated spin-connection invariant, we note that the dimension of the affine frame symmetry group is provided by the algorithm. As any affine frame symmetry is an isometry but not all isometries are affine frame symmetries, this determines the dimension of the symmetry group for a teleparallel geometry. 

We note that the frame-dependent approach to calculate affine frame symmetries is equivalent to the method used in \cite{HJKP2018}; however, our approach can be used to establish more general results for teleparallel geometries. For example, in theorem \ref{thm:MaxSymFS} we have proven that the only teleparallel geometry which admits a maximal number of affine frame symmetries is Minkowski space. More generally, in corollary \ref{cor:maxsym} we have proven that {\it any} Riemann-Cartan geometry with non-zero torsion admits at most a seven-dimensional affine frame symmetry group.

In the six examples presented we have shown how the modified Cartan-Karlhede algorithm can be implemented and explicitly illustrate the relationship between the Cartan invariants, the group of affine  frame symmetries and the subgroup of isometries.  We have also shown that inequivalent solutions to the teleparallel field equations can give equivalent metrics. This suggests that for a given frame anzatz we can produce distinct teleparallel geometries by choosing different spin-connections which cannot be related to each other using Lorentz frame transformations. 

This observation motivates the search for other spin-connections which generate distinct torsion geometries with the same metric geometry. For example, in the torsion geometry based on the PPGW metric which lacks null rotation isotropy about the wave-vector $\ell_a$, it is possible to choose a different spin-connection for the given frame associated to the PPGW metric which will preserve the null rotation symmetry about $\ell_a$. We note that the  presented teleparallel geometry associated to the PPGW spacetime admits the $\text{VSI}$ property and that for any other choice of spin-connection which preserves the null rotation linear isotropy at all orders will necessarily be VSI as well. 

In future work we will examine the potential teleparallel geometries which arise from the so-called class of Kundt frames \cite{Higher}. In 4D Lorentzian geometry the associated metrics are contained in the class of degenerate Kundt metrics which cannot be characterized by their scalar polynomial curvature invariants. Therefore, we will examine the class of teleparallel geometries based on degenerate Kundt frames and investigate the subset which cannot be characterized by the set of SPIs, $\mathcal{I}_T$, formed from the torsion tensor and its covariant derivatives. 



\section*{Acknowledgments}
The authors would like to thank Jos\'{e} Pereira, Juri Obukhov,  Martin Kr\v{s}\v{s}\'{a}k and Christian B\"{o}hmer for their numerous communications on the subject material.  AAC was supported by the Natural Sciences and Engineering Research Council of Canada. RvdH is supported by the St Francis Xavier University Council on Research. DDM supported through the Research Council of Norway, Toppforsk grant no. 250367: Pseudo-
Riemannian Geometry and Polynomial Curvature Invariants: Classification, Characterisation and Applications.


\bibliographystyle{apsrev4-1}
\bibliography{Tele-Parallel-Reference-file0}

\newpage

\section*{Appendix: Lorentz frame transformations of the purely tensorial part of torsion} 

The transformation rules for the components of $t_{(ab)c}$ can be written compactly relative to a complex null frame.

\begin{itemize}

\item For a null rotation about $\ell$: 
\beq \bell' = \bell,~ \bn' = \bn + B \bar{\bm} + \bar{B} \bm + |B|^2 \bell,~ \bm' =  \bm + B \bell \label{lRot} \eeq

\noindent we have the following transformation of the components
\beq \begin{aligned}
t_{(13)1}' = & t_{(13)1}, \\
t_{(12)1}' = & \bar{B} t_{(13)1} + B t_{(14)1}+ t_{(12)1}, \\
t_{(33)1}' = & 2 B t_{(13)1}+2 t_{(33)1}, \\
t_{(14)3}' = & -2 \bar{B} t_{(13)1} + B t_{(14)1} + t_{(14)3},\\
t_{(23)1}' = & 2 |B|^2 t_{(13)1} + B^2 t_{(14)1} + 2 B t_{(12)1}  + 2 \bar{B} t_{(33)1} + t_{(23)1}, \\
t_{(13)2}' = & - |B|^2 t_{(13)1}- 2 B^2 t_{(14)1} - 3 B t_{(12)1}- \bar{B} t_{(33)1}-B t_{(14)3}+ t_{(13)2}, \\
t_{(22)1}'  = & \bar{B}^2 t_{(13)2}+ B^2 \bar{B} t_{(14)1}+2 |B|^2 t_{(12)1} + \bar{B}^2 t_{(33)1}+  B^2 t_{(44)1}+ \bar{B} t_{(23)1} \\ 
& + B t_{(24)1} + t_{(22)1}, \\
t_{(33)2}'  = & - B^3 t_{(14)1} - 2 B^2 t_{(12)1} - B^2 t_{(14)3}+ B t_{(23)1} + 2 B t_{(13)2}+ t_{(33)2}, \\
t_{(24)3}' =& - B\bar{B}^2 t_{(13)1}+2 B^2 \bar{B} t_{(14)1}+ 2 |B|^2 t_{(12)1}- \bar{B}^2 t_{(33)1}+ 2 B^2 t_{(44)1} \\ 
& + 2 |B|^2  t_{(14)3}-2 \bar{B} t_{(23)1}+3 B t_{(24)1} -2 \bar{B} t_{(13)2}+ 2 B t_{(14)2}+t_{(24)3}, \\
t_{(23)2}' =& - t_{(14)1} B^3 \bar{B}-2 t_{(12)1} B^2 \bar{B}- B^3 t_{(44)1}-B^2 \bar{B} t_{(14)3}+ |B|^2 t_{(23)1}  \\ 
& -2 B^2 t_{(24)1}+2 |B|^2 t_{(13)2}  -B^2 t_{(14)2}-B t_{(22)1} +t_{(33)2}\bar{B}-B t_{(24)3}+t_{(23)2}. \end{aligned} \eeq

\noindent The effect of a null rotation about $n$ can be determined by swapping $1 \leftrightarrow 2$, $B = \bar{C}$ and $\bar{B} = C$. 

\item For a boost and a spin, 
\beq \bell' = D \bell,~ \bn' = D^{-1} \bn, \bm' = e^{i\theta} \bm \label{BoostSpin} \eeq 
\noindent we find:
\beq \begin{aligned} 
t_{(13)1}' = & D^2 t_{(13)1},, \\
t_{(12)1}' = & D t_{(12)1}  , \\
t_{(33)1}' = & e^{2 i \theta} t_{(33)1}   , \\
t_{(14)3}' = & D t_{(14)3}  ,\\
t_{(23)1}' = & e^{i \theta} t_{(23)1}  , \\
t_{(13)2}' = & e^{i \theta} t_{(13)2}  , \\
t_{(22)1}'  = & D^{-1} t_{(22)1} , \\
t_{(33)2}'  = & D^{-1} e^{2 i \theta} t_{(33)2} , \\
t_{(24)3}' = & D^{-1} t_{(24)3} , \\
t_{(23)2}' = & D^{-2} e^{i \theta} t_{(23)2} .
\end{aligned} \eeq

\end{itemize}

\end{document}